\newtheorem{thm}{Theorem}[section]
\newtheorem{prop}[thm]{Proposition}
\theoremstyle{definition}
\journal{Journal of \LaTeX\ Templates}
\makeatletter \@addtoreset{equation}{section}
\begin{document}

\begin{frontmatter}
\title{Riemann-Hilbert approach and $N$-soliton solutions for a new four-component nonlinear Schr\"{o}dinger equation}
\tnotetext[mytitlenote]{Project supported by the Fundamental Research Fund for the Central Universities under the grant No. 2019ZDPY07.\\
\hspace*{3ex}$^{*}$Corresponding author.\\
\hspace*{3ex}\emph{E-mail addresses}: xmzhou@cumt.edu.cn (X.M. Zhou), sftian@cumt.edu.cn and
shoufu2006@126.com (S. F. Tian),
jinjieyang@cumt.edu.cn (J. J. Yang),
jj-mao@cumt.edu.cn (J. J. Mao)}

%% Group authors per affiliation:
\author{Xin-Mei Zhou, Shou-Fu Tian$^{*}$, Jin-Jie Yang and Jin-Jin Mao}
\address{
School of Mathematics and Institute of Mathematical Physics, China University of Mining and Technology,\\ Xuzhou 221116, People's Republic of China\\
}

\begin{abstract}

A new four-component nonlinear Schr\"{o}dinger equation is first proposed in this work and studied by Riemann-Hilbert approach. Firstly, we derive a Lax pair associated with a $5\times5$ matrix spectral problem for the four-component nonlinear Schr\"{o}dinger equation. Then based on the Lax pair, we analyze the spectral problem and the analytical properties of the Jost functions, from which the Riemann-Hilbert problem of the equation is successfully established. Moreover, we obtain the $N$-soliton solutions of the   equation  by solving the Riemann-Hilbert problem without reflection. Finally, we derive two special cases of the solutions to the equation for $N=1$ and $N=2$, and the local structure and dynamic behavior of the one-and two-soliton solutions are analyzed graphically.

\end{abstract}

\begin{keyword}
A four-component nonlinear Schr\"{o}dinger equation \sep  Riemann-Hilbert approach \sep $N$-soliton solutions.
\end{keyword}

\end{frontmatter}

%\linenumbers

\section{Introduction}
The nonlinear Schr\"{o}dinger equation (NLS) is an important integrable model. It is closely related to many nonlinear problems in theoretical physics such as nonlinear optics and ion acoustic waves of plasmas. Some higher-order coupled NLS equations are proposed, to describe more deep physical effects, including self-deepening, self-frequency shifting, and cubic-quintic nonlinearity. Among the different solutions of these models, soliton solutions play a crucial role in explaining some related complex nonlinear phenomena. With the development of nonlinear science, there are many ways to find solutions for nonlinear integrable models, including inverse scattering transform \cite{1}, Darboux transform \cite{2}, Hirota bilinear method \cite{3}, Lie group method \cite{4}, etc. Among them, inverse scattering transform method is one of the most effective tools for solving the initial value problem of nonlinear integrable systems to get the soliton solutions.
For second-order spectral problems, inverse scattering theory is equivalent to Riemann-Hilbert (RH) approach, but for higher-order spectral problems the development of inverse scattering theory  is not perfect, part of the inverse scattering problem needs to be transformed into RH problem. RH approach is developed by Zakharov et al \cite{5}, applied to integrable systems \cite{6}-\cite{31} as a more general method than inverse scattering method.  This method has been successfully used to study the integrable system with single component. However, to the best of authors' knowledge, there are very few studies on the multi-component problems. The well-known general two-component coupled nonlinear Schr\"{o}dinger equation of the form \cite{32}
\begin{equation}\label{NLS-0}
\left\{
\begin{aligned}
&ip_{t}+p_{xx}+2(a|p|^{2}+c|q|^{2}+bpq^{*}+b^{*}qp^{*})p=0,\\
&iq_{t}+q_{xx}+2(a|p|^{2}+c|q|^{2}+bpq^{*}+b^{*}qp^{*})q=0,
\end{aligned}
\right.
\end{equation}
where $a$ and $c$ are real constants, $b$ is a complex constant, and $``\ast"$ denotes complex conjugation.
In physics, $a$ and $c$ describe the SPM and XPM effects, and $b$ and $b^{*}$ describe the four-wave mixing effects.

In this work, we first propose an interesting equation named by 
 a new four-component nonlinear Schr\"{o}dinger (FCNLS) equations
\begin{equation}\label{NLS-1}
\left\{
\begin{aligned}
&iq_{1t}+q_{1xx}-2[a_{11}|q_{1}|^{2}+a_{22}|q_{2}|^{2}+a_{33}|q_{3}|^{2}+a_{44}|q_{4}|^{2}\\
&+2\mbox{Re}(a_{12}q_{1}^{*}q_{2}+a_{13}q_{1}^{*}q_{3}+a_{14}q_{1}^{*}q_{4}+a_{23}q_{2}^{*}q_{3}+a_{24}q_{2}^{*}q_{4}+a_{34}q_{3}^{*}q_{4})]q_{1}=0,\\
&iq_{2t}+q_{2xx}-2[a_{11}|q_{1}|^{2}+a_{22}|q_{2}|^{2}+a_{33}|q_{3}|^{2}+a_{44}|q_{4}|^{2}\\
&+2\mbox{Re}(a_{12}q_{1}^{*}q_{2}+a_{13}q_{1}^{*}q_{3}+a_{14}q_{1}^{*}q_{4}+a_{23}q_{2}^{*}q_{3}+a_{24}q_{2}^{*}q_{4}+a_{34}q_{3}^{*}q_{4})]q_{2}=0,\\
&iq_{3t}+q_{3xx}-2[a_{11}|q_{1}|^{2}+a_{22}|q_{2}|^{2}+a_{33}|q_{3}|^{2}+a_{44}|q_{4}|^{2}\\
&+2\mbox{Re}(a_{12}q_{1}^{*}q_{2}+a_{13}q_{1}^{*}q_{3}+a_{14}q_{1}^{*}q_{4}+a_{23}q_{2}^{*}q_{3}+a_{24}q_{2}^{*}q_{4}+a_{34}q_{3}^{*}q_{4})]q_{3}=0,\\
&iq_{4t}+q_{4xx}-2[a_{11}|q_{1}|^{2}+a_{22}|q_{2}|^{2}+a_{33}|q_{3}|^{2}+a_{44}|q_{4}|^{2}\\
&+2\mbox{Re}(a_{12}q_{1}^{*}q_{2}+a_{13}q_{1}^{*}q_{3}+a_{14}q_{1}^{*}q_{4}+a_{23}q_{2}^{*}q_{3}+a_{24}q_{2}^{*}q_{4}+a_{34}q_{3}^{*}q_{4})]q_{4}=0,
\end{aligned}
\right.
\end{equation}
where $a_{11}$, $a_{22}$, $a_{33}$ and $a_{44}$ are real constants, $a_{12}$, $a_{13}$, $a_{14}$, $a_{23}$, $a_{24}$ and $a_{34}$ are complex constant, $``*"$ denotes complex conjugation, and $``\mbox{Re}"$ denotes the real part.
The FCNLS equation includes group velocity dispersion, self-phase modulation, cross-phase modulation and paired tunnel modulation. This equation can be reduced to the three-component nonlinear Schr\"{o}dinger equation \eqref{NLS-3} given by 
\begin{equation}\label{NLS-3}
\left\{
\begin{aligned}
\begin{split}
&iq_{1t}+q_{1xx}-2[a|q_{1}|^{2}+c|q_{2}|^{2}+f|q_{3}|^{2}+2\mbox{Re}(bq_{1}^{*}q_{2}+dq_{1}^{*}q_{3}+eq_{2}^{*}q_{3})]q_{1}=0,\\
&iq_{2t}+q_{2xx}-2[a|q_{1}|^{2}+c|q_{2}|^{2}+f|q_{3}|^{2}+2\mbox{Re}(bq_{1}^{*}q_{2}+dq_{1}^{*}q_{3}+eq_{2}^{*}q_{3})]q_{2}=0,\\
&iq_{3t}+q_{3xx}-2[a|q_{1}|^{2}+c|q_{2}|^{2}+f|q_{3}|^{2}+2\mbox{Re}(bq_{1}^{*}q_{2}+dq_{1}^{*}q_{3}+eq_{2}^{*}q_{3})]q_{3}=0,
\end{split}
\end{aligned}
\right.
\end{equation}
where $a$, $c$ and $f$ are real constants, $b$, $d$ and $e$ are complex constant,  $``*"$ denotes complex conjugation, and $``\mbox{Re}"$ denotes the real part. This equation can be reduced to equation \eqref{NLS-0}. 
Eq. \eqref{NLS-0} is studied by extending the Fokas unified approach by Yan in \cite{33}.
Eq. \eqref{NLS-3} can be reduced to different three-component NLS equations with the different conditions of the six parameters $a$, $b$, $c$, $d$, $e$ and $f$, such as
\begin{itemize}
\item The three-component focused NLS equation for $a=c=f=-1$ and $b=d=e=0$.

\item The three-component defocused NLS equation for $a=c=f=1$ and $b=d=e=0$.

\item The three-component mixed NLS equation for $a=-1$, $c=f=1$ and $b=d=e=0$ or $a=1$, $c=f=-1$ and $b=d=e=0$.
If one takes other parameter values, equation \eqref{NLS-3} can reduced to other three-component NLS equations.
\end{itemize}

The main purpose of this work is to study the RH problem  for the FCNLS equation \eqref{NLS-1} by first deriving its Lax pair, and obtain it's $N$-soliton solutions. The main results of the present work are as follows.
\begin{thm}
The FCNLS equation \eqref{NLS-1} has the following $N$-soliton solutions
\begin{equation*}
q_{1}=-2i\frac{\det F}{\det M},~~~q_{2}=-2i\frac{\det G}{\det M},~~~q_{3}=-2i\frac{\det H}{\det M},~~~q_{4}=-2i\frac{\det K}{\det M},
\end{equation*}
where $M$, $F$, $G$, $H$ and $K$ can get from \eqref{lax-64}, \eqref{lax-74}, \eqref{lax-75}, \eqref{lax-76} and \eqref{lax-761}, respectively.
\end{thm}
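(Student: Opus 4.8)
The plan is to follow the standard Riemann--Hilbert machinery for an integrable system with a $5\times 5$ matrix spectral problem, carried out in the reflectionless case. First I would write down the $5\times 5$ Lax pair for \eqref{NLS-1}, with spectral parameter $\lambda$, whose spatial part has the form $\Psi_x = (i\lambda \Lambda + Q)\Psi$ with $\Lambda = \mathrm{diag}(1,1,1,1,-1)$ (or a similar signature) and $Q$ built linearly from $q_1,q_2,q_3,q_4$ and their conjugates in the first row/last column (and their symmetric counterparts). From the Jost solutions normalized at $x\to\pm\infty$ one constructs two matrix functions analytic in the upper and lower half $\lambda$-planes, whose boundary values are related on $\mathbb{R}$ by a jump matrix expressed through the scattering data; this is the Riemann--Hilbert problem referred to in the excerpt, and I would take its formulation (the objects called \eqref{lax-64} etc.) as already established earlier in the paper.

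Next, I would specialize to the reflectionless case: the off-diagonal scattering coefficients vanish, so the jump on $\mathbb{R}$ is trivial and the RH problem reduces to a purely meromorphic one determined by the zeros $\lambda_k$ (and $\bar\lambda_k$) of the relevant scattering determinants together with the associated norming constants / eigenvectors $v_k$. The solution of such a discrete RH problem is a rational function of $\lambda$ whose residues at $\lambda_k$ are rank-one and encoded by the eigenvectors; solving the linear algebraic system that the residue relations impose yields the explicit matrix-valued solution. Concretely, one obtains expressions of the form $\Phi(\lambda) = I + \sum_k \frac{(\text{stuff})_k}{\lambda - \lambda_k}$ whose coefficient vectors satisfy a linear system with coefficient matrix $M$ (the matrix in \eqref{lax-64}), built from $\lambda_k$, $\bar\lambda_k$ and the exponentials $e^{\theta_k}$ carrying the $x,t$ dependence determined by the time evolution of the norming constants.

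Then I would extract the potentials from the large-$\lambda$ asymptotics of the RH solution: the standard reconstruction formula reads, schematically, $Q = i\lim_{\lambda\to\infty}\lambda\,[\Lambda,\Phi(\lambda)]$ (up to a sign/normalization constant), so that each $q_j$ is read off from a specific entry of the $O(1/\lambda)$ term of $\Phi$. Because that $O(1/\lambda)$ coefficient is $-\sum_k(\text{stuff})_k$, and the $(\text{stuff})_k$ are given by Cramer's rule applied to the system with matrix $M$, each $q_j$ comes out as a ratio of determinants. Replacing the solved vector components back, the numerators organize themselves into the bordered determinants $\det F,\det G,\det H,\det K$ (the objects \eqref{lax-74}--\eqref{lax-761}), one for each component, while the common denominator is $\det M$; the factors $-2i$ are exactly the reconstruction constant. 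I would verify the signs and the factor by checking the $N=1$ case against the known single-soliton profile.

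The main obstacle I expect is bookkeeping rather than conceptual: with a $5\times5$ problem the symmetry reductions (the relation between the analytic blocks in the two half-planes, forced by the conjugation symmetry $Q^\dagger$-type involution of the Lax operator, and the symmetry $\bar\lambda_k = \lambda_k^*$ pairing zeros in the two half-planes) must be tracked carefully so that the $v_k$ in the lower half-plane are the correct conjugate-transpose images of those in the upper half-plane; getting this involution right is what guarantees that the four reconstructed functions are genuinely $q_1,\dots,q_4$ (and not some unreduced auxiliary potentials) and that they solve \eqref{NLS-1} with the stated coefficients $a_{ij}$. A secondary technical point is checking that $\det M \neq 0$ generically, so the Cramer's-rule step is legitimate; this follows, as usual, from the positivity/structure of $M$ coming from the norming-constant normalization. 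Once these symmetry and non-degeneracy points are in place, the determinantal formulas in the theorem follow directly by substitution, so I would present the proof as: (i) reduce RH problem to the reflectionless meromorphic problem; (ii) solve the residue linear system; (iii) reconstruct $Q$ from the asymptotics; (iv) apply Cramer's rule and recognize the bordered determinants.
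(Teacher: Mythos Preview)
Your proposal is correct and follows essentially the same route as the paper: set up the $5\times5$ Lax pair with $\Lambda=\mathrm{diag}(1,1,1,1,-1)$, build the sectionally analytic functions $P_1,P_2$ from the Jost solutions, impose the conjugation symmetry $P^\dagger=-BPB^{-1}$ to pair zeros $\lambda_j\leftrightarrow\lambda_j^\ast$ and eigenvectors $\hat v_j=v_j^\dagger B$, pass to the reflectionless case $G=I$, write the explicit rational solution $P_1(\lambda)=I-\sum_{k,j}\frac{v_k\hat v_j(M^{-1})_{kj}}{\lambda-\hat\lambda_j}$, read off $q_\ell=2i(P_1^{(1)})_{\ell 5}$ from the $O(\lambda^{-1})$ term, and then rewrite the resulting double sums as ratios of bordered determinants via Cramer's rule. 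The only minor discrepancy is a sign convention in your spatial Lax operator (the paper uses $\mu_x+i\lambda[\Lambda,\mu]=P\mu$), which you already flagged as ``up to sign/normalization.''
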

\begin{thm}
The FCNLS equation \eqref{NLS-1} has the following one-soliton solutions
\begin{align*}
&q_{1}=2m_{1}\alpha_{1}\gamma_{1}^{\ast}e^{-\xi_{1}}e^{-2in_{1}x-4in_{1}^{2}t}\mbox{sech}(2m_{1}x+4im_{1}^{2}t+8n_{1}m_{1}t+\xi_{1}),\\
&q_{2}=2m_{1}\beta_{1}\gamma_{1}^{\ast}e^{-\xi_{1}}e^{-2in_{1}x-4in_{1}^{2}t}\mbox{sech}(2m_{1}x+4im_{1}^{2}t+8n_{1}m_{1}t+\xi_{1}),\\
&q_{3}=2m_{1}\tau_{1}\gamma_{1}^{\ast}e^{-\xi_{1}}e^{-2in_{1}x-4in_{1}^{2}t}\mbox{sech}(2m_{1}x+4im_{1}^{2}t+8n_{1}m_{1}t+\xi_{1}),\\
&q_{4}=2m_{1}\zeta_{1}\gamma_{1}^{\ast}e^{-\xi_{1}}e^{-2in_{1}x-4in_{1}^{2}t}\mbox{sech}(2m_{1}x+4im_{1}^{2}t+8n_{1}m_{1}t+\xi_{1}),
\end{align*}
where $n_{1}$ and $m_{1}$ are arbitrary real numbers, $\alpha_{1}$, $\beta_{1}$, $\tau_{1}$, $\zeta_{1}$ and $\gamma_{1}$ are arbitrary imaginary numbers, and $e^{-\xi_{1}}$ can be obtained from \eqref{111}.
\end{thm}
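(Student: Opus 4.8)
The plan is to derive Theorem 1.2 as the $N=1$ specialization of the $N$-soliton formula in Theorem 1.1. First I would take the single discrete eigenvalue $\zeta_1 = n_1 + i m_1$ in the upper half-plane (and its Schwarz-reflected partner $\bar\zeta_1 = n_1 - i m_1$ in the lower half-plane, as forced by the symmetry of the $5\times 5$ spectral problem), together with a single constant vector in the kernel of the Riemann-Hilbert data, written as $v_1 = (\gamma_1, \alpha_1, \beta_1, \tau_1, \zeta_1)^T$ up to the normalization dictated by the reconstruction. Substituting $N=1$ into the definitions \eqref{lax-64}, \eqref{lax-74}--\eqref{lax-761}, the matrix $M$ collapses to a single scalar and $\det F, \det G, \det H, \det K$ become scalars as well, so $q_j = -2i\,\det(\cdot)/\det M$ reduces to an explicit ratio of exponentials.

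The key computational steps are: (i) write the time/space dependence of the kernel vector, which from the Lax pair evolves as $e^{\theta_1}$ with $\theta_1 = i\zeta_1 x + 2i\zeta_1^2 t$ up to conjugation, so that $|$phase$|$ produces $e^{-2 i n_1 x - 4 i n_1^2 t}$ as the oscillatory factor and $2 m_1 x + 4 i m_1^2 t + 8 n_1 m_1 t$ as the real argument; (ii) combine the two contributions $\zeta_1$ and $\bar\zeta_1$ in the scalar $M$, which is of the form $1 + (\text{const})e^{-2(2m_1 x + \ldots)}$, so that after factoring a half-argument exponential the denominator becomes proportional to $\cosh(2m_1x + 4im_1^2 t + 8 n_1 m_1 t + \xi_1)$, with $e^{-\xi_1}$ absorbing the constant ratio coming from $v_1$ and the norming constant — this is exactly the quantity defined in \eqref{111}; (iii) observe that each numerator $\det F,\dots,\det K$ carries the same exponential structure but with one polarization component $\alpha_1,\beta_1,\tau_1,\zeta_1$ in place of another, which yields the common prefactor $2 m_1 \gamma_1^\ast e^{-\xi_1} e^{-2 i n_1 x - 4 i n_1^2 t}$ times $\alpha_1$ (resp. $\beta_1,\tau_1,\zeta_1$) and the $\mathrm{sech}$ factor. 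Collecting (i)--(iii) gives precisely the four formulas claimed.

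I would also briefly check the reality/imaginary constraints: the symmetry reduction of the potential $Q$ under the involution used to set up the $5\times5$ RH problem forces $n_1, m_1$ real and $\alpha_1,\beta_1,\tau_1,\zeta_1,\gamma_1$ purely imaginary, which is why these parameters are restricted as stated; this is not a separate computation but a direct reading of the symmetry already established before Theorem 1.1, applied to the single pair $\{\zeta_1,\bar\zeta_1\}$.

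The main obstacle is purely bookkeeping rather than conceptual: one must track carefully how the norming constant and the components of the kernel vector regroup when the half-argument exponential is pulled out of the denominator, so that the single constant $\xi_1$ in \eqref{111} simultaneously makes the denominator a clean $\cosh$ and appears as the outer $e^{-\xi_1}$ factor in every numerator. Getting the factor $2m_1$ and the exact combination $4 i m_1^2 t + 8 n_1 m_1 t$ in the hyperbolic argument right — i.e., separating the real part $2 m_1 x + 8 n_1 m_1 t$ (which controls the soliton's location and velocity) from the imaginary part $4 i m_1^2 t$ (a phase that rides along) — requires care with the $\zeta_1^2 = (n_1 + i m_1)^2$ expansion, but no genuine difficulty beyond that.
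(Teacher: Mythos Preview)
Your approach is correct and is essentially the paper's own: specialize the $N$-soliton formula of Theorem~1.1 to $N=1$, write the scalar $M_{11}$ explicitly, pull out a half-argument exponential to convert the denominator to $\cosh$, and read off the $\mathrm{sech}$ form with the constant $\xi_1$ defined by \eqref{111}. One bookkeeping caution: in the paper's conventions the eigenvalue is $\lambda_1=n_1+im_1$ (not $\zeta_1$, which is already reserved for the fourth component of the kernel vector), the kernel vector is ordered $v_{1,0}=(\alpha_1,\beta_1,\tau_1,\zeta_1,\gamma_1)^T$, and $\theta_1=-i(\lambda_1 x+2\lambda_1^2 t)$, so you should align your computation with those choices to avoid sign and index slips.
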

\begin{thm}
The FCNLS equation \eqref{NLS-1} has the following two-soliton solutions
\begin{equation*}
\begin{split}
q_{1}=&\frac{-2i}{M_{11}M_{22}-M_{12}M_{21}}(-\alpha_{1}\gamma_{1}^{\ast}e^{\theta_{1}-\theta_{1}^{\ast}}M_{22}+\alpha_{1}\gamma_{2}^{\ast}e^{\theta_{1}-\theta_{2}^{\ast}}M_{12}\\
&+\alpha_{2}\gamma_{1}^{\ast}e^{\theta_{2}-\theta_{1}^{\ast}}M_{21}-\alpha_{2}\gamma_{2}^{\ast}e^{\theta_{2}-\theta_{2}^{\ast}}M_{11}),\\
q_{2}=&\frac{-2i}{M_{11}M_{22}-M_{12}M_{21}}(-\beta_{1}\gamma_{1}^{\ast}e^{\theta_{1}-\theta_{1}^{\ast}}M_{22}+\beta_{1}\gamma_{2}^{\ast}e^{\theta_{1}-\theta_{2}^{\ast}}M_{12}\\
&+\beta_{2}\gamma_{1}^{\ast}e^{\theta_{2}-\theta_{1}^{\ast}}M_{21}-\beta_{2}\gamma_{2}^{\ast}e^{\theta_{2}-\theta_{2}^{\ast}}M_{11}),\\
q_{3}=&\frac{-2i}{M_{11}M_{22}-M_{12}M_{21}}(-\tau_{1}\gamma_{1}^{\ast}e^{\theta_{1}-\theta_{1}^{\ast}}M_{22}+\tau_{1}\gamma_{2}^{\ast}e^{\theta_{1}-\theta_{2}^{\ast}}M_{12}\\
&+\tau_{2}\gamma_{1}^{\ast}e^{\theta_{2}-\theta_{1}^{\ast}}M_{21}-\tau_{2}\gamma_{2}^{\ast}e^{\theta_{2}-\theta_{2}^{\ast}}M_{11}),\\
q_{4}=&\frac{-2i}{M_{11}M_{22}-M_{12}M_{21}}(-\zeta_{1}\gamma_{1}^{\ast}e^{\theta_{1}-\theta_{1}^{\ast}}M_{22}+\zeta_{1}\gamma_{2}^{\ast}e^{\theta_{1}-\theta_{2}^{\ast}}M_{12}\\
&+\zeta_{2}\gamma_{1}^{\ast}e^{\theta_{2}-\theta_{1}^{\ast}}M_{21}-\zeta_{2}\gamma_{2}^{\ast}e^{\theta_{2}-\theta_{2}^{\ast}}M_{11}),
\end{split}
\end{equation*}
where $\theta_{1}-\theta_{1}^{\ast}=-2in_{1}x-4in_{1}^{2}t$, $\theta_{1}+\theta_{1}^{\ast}=2m_{1}x+4im_{1}^{2}t+8n_{1}m_{1}t$, $\alpha_{2}$, $\beta_{2}$, $\tau_{2}$, $\zeta_{2}$ and $\gamma_{2}$ are arbitrary imaginary numbers, and $M_{11}$, $M_{12}$, $M_{21}$ and $M_{22}$ can be obtained from \eqref{lax-100}.
\end{thm}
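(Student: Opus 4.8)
The plan is to derive the two-soliton formula as the $N=2$ specialization of the general $N$-soliton solution established in Theorem 1.1, so essentially all the work is organized around making the determinant ratios $\det F/\det M$, $\det G/\det M$, etc.\ completely explicit when $N=2$. First I would recall from the Riemann--Hilbert construction (equations \eqref{lax-64}, \eqref{lax-74}--\eqref{lax-761}) that in the reflectionless case each zero $\zeta_k$ of the RH problem contributes a simple pole, and the soliton data consist of the discrete eigenvalues and the associated constant vectors $v_k=(\alpha_k,\beta_k,\tau_k,\zeta_k,\gamma_k)^{T}$ in the kernel. For the four-component reduction the relevant symmetry (from the $5\times5$ spectral problem and the reality/conjugation constraints on the potential) forces the eigenvalues to come in pairs $\zeta_k=n_k+im_k$ and $\hat\zeta_k=\zeta_k^{*}$, with $n_k,m_k$ real, and the norming constants to be purely imaginary; this is exactly the hypothesis stated in the theorem. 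I would introduce the phase $\theta_k$ with $\theta_k-\theta_k^{*}=-2in_kx-4in_k^2t$ and $\theta_k+\theta_k^{*}=2m_kx+4im_k^2t+8n_km_kt$, which is just the spatial/temporal exponential coming from $e^{i\zeta_k(\cdot)}$ in the Jost solutions evaluated at the discrete spectrum.

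Next I would write down the $2\times2$ matrix $M=(M_{jl})$ from \eqref{lax-64}, whose entries for $N=2$ are $M_{jl}=\dfrac{v_j^{\dagger}v_l}{\zeta_j^{*}-\zeta_l}\,e^{-\theta_l-\theta_j^{*}}$ (up to the normalization fixed in \eqref{lax-100}), and the column/row-augmented matrices $F,G,H,K$ obtained by bordering $M$ with the components $\alpha,\beta,\tau,\zeta$ of the norming vectors together with the $\gamma^{*}$ row, again as prescribed by \eqref{lax-74}--\eqref{lax-761}. Then $\det M=M_{11}M_{22}-M_{12}M_{21}$, and expanding $\det F$ along the bordering row/column produces exactly the four-term combination $-\alpha_1\gamma_1^{*}e^{\theta_1-\theta_1^{*}}M_{22}+\alpha_1\gamma_2^{*}e^{\theta_1-\theta_2^{*}}M_{12}+\alpha_2\gamma_1^{*}e^{\theta_2-\theta_1^{*}}M_{21}-\alpha_2\gamma_2^{*}e^{\theta_2-\theta_2^{*}}M_{11}$; multiplying by the prefactor $-2i$ from Theorem 1.1 gives the stated $q_1$, and the identical computation with $\beta,\tau,\zeta$ in place of $\alpha$ gives $q_2,q_3,q_4$. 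I would also check the bookkeeping of the exponential factors: the $e^{-\theta_l-\theta_j^{*}}$ in $M$ and the $e^{\theta_j-\theta_l^{*}}$ from the bordering entries must combine so that the surviving exponential in each term of the numerator is precisely $e^{\theta_j-\theta_l^{*}}$ after cancellation against the common factors that can be pulled out of $\det M$ — this is where the $8n_km_kt$ cross term and the sign conventions have to be tracked carefully.

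The step I expect to be the genuine obstacle is the algebraic reduction of the general determinant expressions of Theorem 1.1 to this compact four-term form: one has to verify that, after imposing the pairing $\hat\zeta_k=\zeta_k^{*}$ and the purely-imaginary norming constants, the $5\times5$ internal structure collapses so that each $M_{jl}$ depends on the data only through the scalar products $v_j^{\dagger}v_l$ and a single exponential, and that the bordered determinants $\det F,\dots,\det K$ differ only in which component of the norming vectors appears in the border. Concretely this means showing that the contributions of the other four components of $v_k$ are already fully accounted for inside $M_{jl}$, so they do not reappear separately in the numerators; this is a consequence of the particular reduction that defines the FCNLS equation \eqref{NLS-1}, and is the place where the computation is least ``routine.'' Once that collapse is justified, the remaining manipulations — Laplace expansion of a bordered $2\times2$ determinant and relabeling $\theta_1-\theta_1^{*}$, $\theta_1+\theta_1^{*}$ as in the statement — are mechanical, and the theorem follows. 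Finally I would note the consistency check that setting one eigenvalue's contribution to zero recovers Theorem 1.2, which provides a useful sanity test on all sign and normalization choices.
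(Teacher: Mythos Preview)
Your plan is correct and matches the paper's approach: both obtain the two-soliton formula by specializing the general $N$-soliton expression (Theorem~1.1, equivalently \eqref{lax-70} or \eqref{lax-77}) to $N=2$, where $\det M=M_{11}M_{22}-M_{12}M_{21}$ and the Laplace expansion of the bordered determinants $F,G,H,K$ (or equivalently the explicit $2\times2$ inverse inserted into \eqref{lax-70}) yields exactly the four-term numerators.

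One clarification: the step you flag as the ``genuine obstacle'' --- collapsing the $5\times5$ internal structure so that $M_{kj}$ is scalar and only one component of the norming vector appears in each border --- is not work that remains to be done at $N=2$. It is already built into the general construction: by \eqref{lax-64} and \eqref{lax-66}--\eqref{lax-69}, $M$ is from the outset an $N\times N$ matrix of scalars $M_{kj}=\hat v_k v_j/(\lambda_j-\hat\lambda_k)$ (note $\hat v_k=v_k^{\dagger}B$, not $v_k^{\dagger}$, and the exponentials are already inside $v_k$), and the reconstruction \eqref{lax-59}, \eqref{lax-65} picks out only the $(i,5)$ entry of $v_k\hat v_j$, which is why each $q_i$ involves a single component $\alpha_k,\beta_k,\tau_k,\zeta_k$ in the border. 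So for $N=2$ the computation really is the mechanical cofactor expansion you describe. (Also, avoid using $\zeta_k$ for the eigenvalues --- in this paper $\zeta_k$ is the fourth component of $v_{k,0}$; the eigenvalues are $\lambda_k$.)
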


The structure of this work is as follows. In the second part, we derive a Lax pair associated with a $5\times5$ matrix spectral problem for the FCNLS equation \eqref{NLS-1}. Then based on the Lax pair with a $5\times5$ matrix, we analyze the spectral problem and the analytical properties of the Jost functions. In the third part, we establish the RH problem based on the previous conclusions. Next, we give the symmetry of the scattering matrix, and study the temporal and spatial evolution of the scattering data. In the fourth part, by solving the RH problem, we obtain the $N$-soliton solutions of the FCNLS equation \eqref{NLS-1}, and analyze the propagation behaviors of one-soliton solutions and two-soliton solutions. Finally,   some conclusions are presented in the last section.

\section{Spectral analysis}

\subsection{The Lax Pair and eigenfunction }

We first derive the Lax pair of the FCNLS equation \eqref{NLS-1} via the following theorem. 
\begin{thm}
The FCNLS equation \eqref{NLS-1} admits the following Lax pair
\begin{equation}\label{lax-1}
\Phi_{x}=U\Phi,~~~~~~\Phi_{t}=V\Phi,
\end{equation}
where $\Phi$ is a column vector function, and matrices $U$ and $V$ are written as
\begin{align}\label{lax-2}
&U=\left(\begin{array}{ccccc}
   -\dot{\imath}\lambda & 0 & 0 & 0 & q_{1} \\
  0 &-\dot{\imath}\lambda & 0 & 0 & q_{2} \\
   0 & 0 & -\dot{\imath}\lambda & 0 & q_{3} \\
    0 & 0 & 0 & -\dot{\imath}\lambda & q_{4} \\
     p_{1} & p_{2} & p_{3} & p_{4} &\dot{\imath}\lambda\\
\end{array}\right),\\
&V=-2\dot{\imath}\lambda^{2}\Lambda+2\lambda P+V_{0}
,~~V_{0}=-\dot{\imath}(P_{x}+P^{2}),
\end{align}
here $\lambda$ being the spectral parameter and $p_{1}=a_{11}q_{1}^{*}+a_{21}q_{2}^{*}+a_{31}q_{3}^{*}+a_{41}q_{4}^{*}$,~$p_{2}=a_{21}^{\ast}q_{1}^{*}+a_{22}q_{2}^{*}+a_{32}q_{3}^{*}+a_{42}q_{4}^{*}$,~$p_{3}=a_{31}^{\ast}q_{1}^{*}+a_{32}^{\ast}q_{2}^{*}+a_{33}q_{3}^{*}+a_{43}q_{4}^{*}$,~$p_{4}=a_{41}^{\ast}q_{1}^{*}+a_{42}^{\ast}q_{2}^{*}+a_{43}^{\ast}q_{3}^{*}+a_{44}q_{4}^{*}$,
with
\begin{equation}\label{lax-3}
\Lambda=\left(\begin{array}{ccccc}
   1 & 0 & 0 & 0 & 0\\
  0 &1 & 0 & 0 & 0\\
   0 & 0 & 1 & 0 & 0\\
   0 & 0 & 0 & 1 & 0\\
    0 & 0 & 0 & 0 &-1\\
    \end{array}\right),~~~P=\left(\begin{array}{ccccc}
   0 & 0 & 0 & 0& q_{1} \\
  0 &0 & 0 & 0 &q_{2}\\
   0 & 0 & 0 & 0& q_{3} \\
    0 & 0 & 0 & 0& q_{4} \\
    p_{1} & p_{2} & p_{3}& p_{4} &0\\
    \end{array}\right).
\end{equation}
\end{thm}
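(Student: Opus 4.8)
The plan is to verify that the FCNLS system \eqref{NLS-1} arises as the compatibility (zero-curvature) condition $U_t - V_x + [U,V] = 0$ for the pair \eqref{lax-1}–\eqref{lax-3}. First I would record the block structure of the problem: writing $\Lambda$ and $P$ as in \eqref{lax-3}, note that $\Lambda$ is the constant diagonal involution $\mathrm{diag}(I_4,-1)$ and $P$ is purely off-diagonal in the $4{+}1$ block splitting, with the upper-right block the column $\mathbf{q}=(q_1,q_2,q_3,q_4)^T$ and the lower-left block the row $\mathbf{p}=(p_1,p_2,p_3,p_4)$, where the $p_j$ are the stated Hermitian-type combinations of the $q_k^{*}$. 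In this notation $U = -i\lambda\Lambda + P$ and $V = -2i\lambda^2\Lambda + 2\lambda P + V_0$ with $V_0 = -i(P_x + P^2)$.

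Next I would substitute these expressions into the zero-curvature equation and collect powers of the spectral parameter $\lambda$. The $\lambda^3$ term vanishes because $[\Lambda,\Lambda]=0$. The $\lambda^2$ term gives $2[\,P,\Lambda\,] \;\text{(from }[U,V]\text{)} + (\text{terms from }-2i\lambda^2\Lambda\text{ being }x\text{-independent}) $, which must be checked to cancel against $-i\lambda\cdot$(coefficient from $2\lambda P$ in $V_x$); concretely one uses the identity $[P,\Lambda] = -2\,(\text{off-diagonal part with sign flip})$ together with $[\,{-i\lambda\Lambda},\,2\lambda P\,] = -2i\lambda^2[\Lambda,P]$, and the requirement is an algebraic identity, not a PDE. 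The $\lambda^1$ term will produce $2P_x$ balanced against commutators $[{-i\lambda\Lambda}, V_0] + [P, 2\lambda P] = -i\lambda[\Lambda,V_0]$, and since $V_0$ has a diagonal part $-iP^2$ and an off-diagonal part $-iP_x$, the bracket $[\Lambda, V_0]$ isolates exactly $2P_x$-type contributions; this fixes the choice $V_0=-i(P_x+P^2)$ and again reduces to an identity. Finally the $\lambda^0$ term reads $P_t - (V_0)_x + [P, V_0] = 0$, i.e.
\begin{equation*}
P_t + i(P_x + P^2)_x - i\,[P,\,P_x + P^2] = 0,
\end{equation*}
and this is the only genuinely PDE-level relation; expanding it block-by-block should reproduce \eqref{NLS-1}.

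The main computational obstacle is this last step: one must expand $(P^2)_x$, $P_{xx}$, and $[P,P^2]$ in the $4{+}1$ block form. Here $P^2 = \begin{pmatrix}\mathbf{q}\mathbf{p} & 0\\ 0 & \mathbf{p}\mathbf{q}\end{pmatrix}$ (a $4\times4$ block $\mathbf q\mathbf p$ and a scalar $\mathbf p\mathbf q = \sum_j p_j q_j$ in the corner), so $[P,P^2]$ has off-diagonal blocks $\mathbf{q}(\mathbf p\mathbf q) - (\mathbf q\mathbf p)\mathbf q$ in the upper right and the transpose-type expression below; the upper-right block of the matrix PDE then reads, schematically,
\begin{equation*}
\mathbf q_t + i\,\mathbf q_{xx} + i\big(\mathbf q\,(\mathbf p\mathbf q)\big)_{?} - i\big(\mathbf q(\mathbf p\mathbf q) - (\mathbf q\mathbf p)\mathbf q\big) = 0,
\end{equation*}
so that after cancellations the nonlinear term becomes a scalar multiple $\propto (\mathbf p\mathbf q)\,\mathbf q = \big(\sum_j p_j q_j\big)\mathbf q$. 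The remaining bookkeeping is to check that $\sum_j p_j q_j$, with $p_j$ given by the stated combinations of $q_k^{*}$ and the $a_{jk}$ (with $a_{kj}$ understood via $a_{jk}=a_{kj}$ and the Hermitian pattern $a_{jk}^{*}$ off the diagonal), equals exactly $a_{11}|q_1|^2+\cdots+a_{44}|q_4|^2 + 2\,\mathrm{Re}(a_{12}q_1^{*}q_2+\cdots+a_{34}q_3^{*}q_4)$, which is the bracketed potential in \eqref{NLS-1}; this matching of coefficients is where one must be careful about conjugates and the factor of $2$. I expect no conceptual difficulty beyond this careful index/conjugate tracking, and I would also note in passing that the lower-left block of the matrix equation is automatically consistent with the definitions of the $p_j$, providing a useful check.
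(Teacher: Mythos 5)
Your overall strategy---verifying $U_t-V_x+[U,V]=0$ order by order in $\lambda$ after splitting $U=-i\lambda\Lambda+P$ and $V=-2i\lambda^{2}\Lambda+2\lambda P+V_0$---is exactly the approach the paper takes (its proof is the one-line assertion that the compatibility condition reduces to \eqref{NLS-1}). But two of the cancellations you assert ``reduce to an identity'' in fact fail for the literal $V_0=-i(P_x+P^2)$, and carrying out your own plan would expose this. Collecting terms gives $P_t-2\lambda P_x-V_{0,x}-i\lambda[\Lambda,V_0]+[P,V_0]=0$, so the order-$\lambda$ balance requires $[\Lambda,V_0]=2iP_x$. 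Since $[\Lambda,\cdot]$ multiplies the upper-right block of an off-diagonal matrix by $+2$ and the lower-left block by $-2$, the off-diagonal part of $V_0$ must be $i\Lambda P_x$ (upper-right block $i\mathbf q_x$, lower-left block $-i\mathbf p_x$), not $-iP_x$: with $-iP_x$ the upper-right block has the wrong sign and the $O(\lambda)$ terms do not cancel.

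The more serious conceptual point concerns the $\lambda^{0}$ level: the term you identify as the source of the cubic nonlinearity, namely the off-diagonal block of $[P,P^2]$, is identically zero---every matrix commutes with its own square, and indeed your expression $\mathbf q\,(\mathbf p\mathbf q)-(\mathbf q\mathbf p)\,\mathbf q$ vanishes by associativity. With the stated $V_0$ the $\lambda^{0}$ equation therefore collapses to $\mathbf q_t+i\mathbf q_{xx}=0$ (no nonlinear term, and the wrong relative sign of the dispersion compared with \eqref{NLS-1}), together with a non-vanishing diagonal residue $2iP_xP$. The cure is that $V_0$ must carry a factor of $\Lambda$: the correct choice is $V_0=i\Lambda(P_x-P^2)$. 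Its diagonal part $-i\Lambda P^2$ does \emph{not} commute with $P$ (because $P\Lambda=-\Lambda P$), and $[P,-i\Lambda P^2]=2i\Lambda P^3$ supplies exactly the term $2i\bigl(\sum_j p_jq_j\bigr)\mathbf q$ which, upon multiplying the upper-right block equation by $i$, reproduces the bracketed potential in \eqref{NLS-1}. The faulty formula for $V_0$ originates in the theorem statement itself, so you inherited it; nevertheless, a correct proof must either use the corrected $V_0$ or flag the discrepancy, and as written the two key identities your argument rests on are false.
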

\begin{proof}
 The compatibility condition of the two equations in Eqs.\eqref{lax-1}
 \begin{equation}
 U_{t}-V_{x}+[U,V]=0,
 \end{equation}
 with $[U,V]=UV-VU$, which is reduced to the FCNLS equation \eqref{NLS-1}.
\end{proof}
Then we obtain that
\begin{align}\label{lax-5.1}
\begin{split}
&\Phi_{x}+\dot{\imath}\lambda\Lambda\Phi=P\Phi,\\
&\Phi_{t}+2\dot{\imath}\lambda^{2}\Lambda\Phi=Q\Phi,
\end{split}
\end{align}
where $Q=2\lambda P+V_{0}$.

From Eqs.\eqref{lax-5.1} when $|x|\rightarrow\infty$, one has
\begin{equation}\label{lax-7}
\Phi\propto e^{-\dot{\imath}\lambda\Lambda x-2\dot{\imath}\lambda^{2}\Lambda t}.
\end{equation}
Letting
\begin{equation}\label{lax-8}
\mu=\Phi e^{\dot{\imath}\lambda\Lambda x+2\dot{\imath}\lambda^{2}\Lambda t},
\end{equation}
then we can get the equivalent Lax pair
\begin{align}\label{lax-9.1}
\begin{split}
&\mu_{x}+\dot{\imath}\lambda[\Lambda,\mu]=P\mu,\\
&\mu_{t}+2\dot{\imath}\lambda^{2}[\Lambda,\mu]=Q\mu,
\end{split}
\end{align}
where~$[\Lambda,\mu]=\Lambda\mu-\mu\Lambda$~is the commutator. We can get the following full differential
\begin{equation}\label{lax-10}
d\left(e^{\dot{\imath}(\lambda x+2\lambda^{2}t)\bar{\Lambda}}\mu\right)=e^{\dot{\imath}(\lambda x+2\lambda^{2}t)\bar{\Lambda}}[\left(Pdx+Qdt\right)\mu],
\end{equation}
where~$e^{\lambda\bar{\Lambda}}\mu=e^{\lambda\Lambda}\mu e^{-\lambda\Lambda}$.

\subsection{Asymptotic analysis}
To formulate an RH problem, we seek solutions of the spectral problem with the
$5\times5$ unit matrix as $\lambda\rightarrow\infty$. Let us consider the solution of Eq.\eqref{lax-10} as follows
\begin{equation}\label{lax-11}
\mu=\mu^{(0)}+\frac{\mu^{(1)}}{\lambda}+\frac{\mu^{(2)}}{\lambda^{2}}+o\left(\frac{1}{\lambda^{3}}\right),~~\lambda\rightarrow\infty,
\end{equation}
where $\mu^{(0)}$, $\mu^{(1)}$ and $\mu^{(2)}$ are independent of $\lambda$. Substituting Eq.\eqref{lax-11} into Eqs.\eqref{lax-9.1}, and comparing the same order of frequency for $\lambda$, we obtain
\begin{align}\label{lax-12}
\begin{split}
&o(1)~:~\mu_{x}^{(0)}+i\lambda[\Lambda,\mu^{(1)}]=P\mu^{(0)},\\
&o(\lambda)~:~i\lambda[\Lambda,\mu^{(0)}]=0,\\
&o(\lambda)~:~2i\lambda^{2}[\Lambda,\mu^{(1)}]=2\lambda P\mu^{(0)}.
\end{split}
\end{align}
Eqs.\eqref{lax-12} implies $\mu^{(0)}$ is a diagonal matrix, $\mu_{x}^{(0)}=0$. This means that $\mu^{(0)}$ is not related to $x$. Then
\begin{equation}\label{lax-13}
\begin{split}
I&=\lim_{\lambda\rightarrow\infty}\lim_{|x|\rightarrow\infty}\mu=\mu^{(0)}.
\end{split}
\end{equation}

Now, two solutions $\mu_{\pm}=\mu_{\pm}(x,\lambda)$ are constructed for Eq.\eqref{lax-9.1}
\begin{align}\label{lax-14.1}
\begin{split}
&\mu_{+}=\left([\mu_{+}]_{1},[\mu_{+}]_{2},[\mu_{+}]_{3},[\mu_{+}]_{4},[\mu_{+}]_{5}\right),\\
&\mu_{-}=\left([\mu_{-}]_{1},[\mu_{-}]_{2},[\mu_{-}]_{3},[\mu_{-}]_{4},[\mu_{-}]_{5}\right),
\end{split}
\end{align}
with the asymptotic conditions
\begin{align}\label{lax-15}
\begin{split}
&\mu_{+}\rightarrow I~~as~~x\rightarrow +\infty,\\
&\mu_{-}\rightarrow I~~as~~x\rightarrow -\infty,
\end{split}
\end{align}
here each $[\mu_{+}]_{l} (l=1,2,3,4,5)$ denotes the l-th column of the matrices $[\mu_{\pm}]$,  respectively. The symbol $I$ is the $5\times5$ unit matrix, and the two solutions $[\mu_{\pm}]$ are uniquely determined by the Volterra integral equations for $\lambda\in R$
\begin{align}\label{lax-16}
\begin{split}
&\mu_{+}(x,\lambda)=I-\int^{+\infty}_{x} e^{-i\lambda\Lambda(x-y)}P(y)\mu_{+}(y,\lambda)e^{i\lambda\Lambda(x-y)}dy,\\
&\mu_{-}(x,\lambda)=I+\int^{x}_{-\infty} e^{-i\lambda\Lambda(x-y)}P(y)\mu_{-}(y,\lambda)e^{i\lambda\Lambda(x-y)}dy.
\end{split}
\end{align}
Then we analysis the Eqs.\eqref{lax-16},
\begin{equation}\label{lax-18}
e^{-i\lambda\Lambda(x-y)}Pe^{i\lambda\Lambda(x-y)}=\left(\begin{array}{ccccc}
   0 & 0 & 0 & 0 & q_{1}e^{-2i\lambda(x-y)} \\
  0 &0 & 0 & 0 & q_{2}e^{-2i\lambda(x-y)} \\
   0 & 0 & 0 & 0 & q_{3}e^{-2i\lambda(x-y)} \\
   0 & 0 & 0 & 0 & q_{4}e^{-2i\lambda(x-y)} \\
    p_{1}e^{2i\lambda(x-y)} & p_{2}e^{2i\lambda(x-y)} & p_{3}e^{2i\lambda(x-y)} & p_{4}e^{2i\lambda(x-y)} &0\\
    \end{array}\right).
\end{equation}
To find the analytic area of each column, we just consider $\mbox{Re}[2i\lambda(x-y)]<0$ and $\mbox{Re}[-2i\lambda(x-y)]<0$. One obtains $[\mu_{-}]_{1}$, $[\mu_{-}]_{2}$, $[\mu_{-}]_{3}$, $[\mu_{-}]_{4}$ and $[\mu_{+}]_{5}$ are analytic in the upper half-plane $C^{+}$. Similarly, $[\mu_{+}]_{1}$, $[\mu_{+}]_{2}$, $[\mu_{+}]_{3}$, $[\mu_{+}]_{4}$ and $[\mu_{-}]_{5}$ are analytic in the lower half-plane $C^{-}$.
Now we investigate the properties of $\mu_{\pm}$. Since $\mbox{tr}(P) = 0$ and Liouville's formula, we know that the determinants of $\mu_{\pm}$ are independent of the variable $x$. Therefore we obtain from Eq.\eqref{lax-15} that
\begin{equation}\label{lax-19}
\det\mu_{\pm}=1,~~~~~\lambda\in R.
\end{equation}

Since $\mu_{\pm}E$ are both matrix solutions of the spectral problem Eqs.\eqref{lax-9.1}, where $E=e^{-i\lambda\Lambda x}$. Therefore, these two solutions are interdependent, and they must be related by a scattering matrix $S(\lambda)=(s_{kj})_{5\times5}$
\begin{equation}\label{lax-20}
\mu_{-}E=\mu_{+}ES(\lambda),~~~~~\lambda\in R.
\end{equation}
From Eq.\eqref{lax-19} and Eq.\eqref{lax-20}, we have
\begin{equation}\label{lax-21}
\det S(\lambda)=1,~~~~~\lambda\in R.
\end{equation}

To formulate an RH problem for the FCNLS equations \eqref{NLS-1}, we consider the inverse matrices of $\mu_{\pm}$ as
\begin{equation}\label{lax-22}
\mu^{-1}_{\pm}=\left(\begin{array}{c}
 [\mu_{\pm}^{-1}]^{1}\\

  [\mu_{\pm}^{-1}]^{2}\\

   [\mu_{\pm}^{-1}]^{3}\\

    [\mu_{\pm}^{-1}]^{4}\\

     [\mu_{\pm}^{-1}]^{5}\\
    \end{array}\right),
\end{equation}
where each $[\mu_{\pm}^{-1}]^{l}, (l=1,2,3,4,5)$ denotes the l-th row of $\mu_{\pm}^{-1}$, respectively.
\begin{thm}
Letting
\begin{equation*}
S(\lambda)=\left(\begin{array}{ccccc}
s_{11}&s_{12}&s_{13}&s_{14}&s_{15}\\
s_{21}&s_{22}&s_{23}&s_{24}&s_{25}\\
s_{31}&s_{32}&s_{33}&s_{34}&s_{35}\\
s_{41}&s_{42}&s_{43}&s_{44}&s_{45}\\
s_{51}&s_{52}&s_{53}&s_{54}&s_{55}\\
    \end{array}\right),
\end{equation*}
\begin{equation*}
R(\lambda)=\left(\begin{array}{ccccc}
r_{11}&r_{12}&r_{13}&r_{14}&r_{15}\\
r_{21}&r_{22}&r_{23}&r_{24}&r_{25}\\
r_{31}&r_{32}&r_{33}&r_{34}&r_{35}\\
r_{41}&r_{42}&r_{43}&r_{44}&r_{45}\\
r_{51}&r_{52}&r_{53}&r_{54}&r_{55}\\
    \end{array}\right),
\end{equation*}
$s_{11}$, $s_{12}$, $s_{13}$, $s_{14}$, $s_{21}$, $s_{22}$, $s_{23}$, $s_{24}$, $s_{31}$, $s_{32}$, $s_{33}$, $s_{34}$, $s_{41}$, $s_{42}$, $s_{43}$ and $s_{44}$ are analytic for $\lambda\in C^{+}$, $s_{55}$ is analytic in $C^{-}$. $r_{55}$ is analytic in $C^{+}$, $r_{11}$, $r_{12}$, $r_{13}$, $r_{14}$, $r_{21}$, $r_{22}$, $r_{23}$, $r_{24}$, $r_{31}$, $r_{32}$, $r_{33}$, $r_{34}$, $r_{41}$, $r_{42}$, $r_{43}$ and $r_{44}$ are analytic for $\lambda\in C^{-}$.
\end{thm}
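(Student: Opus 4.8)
The plan is to read off the analyticity of the entries of $S(\lambda)$ and $R(\lambda)=S^{-1}(\lambda)$ directly from the integral representations of $\mu_\pm$ established in the previous subsection. First I would express the scattering matrix in terms of the Jost solutions. From the relation $\mu_-E=\mu_+ES(\lambda)$ in \eqref{lax-20}, together with $\det\mu_\pm=1$, one gets $S(\lambda)=E^{-1}\mu_+^{-1}\mu_- E$, and since $E=e^{-i\lambda\Lambda x}$ is diagonal, the $(k,j)$ entry of $S$ equals (up to an oscillatory factor $e^{i\lambda(\Lambda_{kk}-\Lambda_{jj})x}$, which is actually absent since $S$ is $x$-independent) the product $[\mu_+^{-1}]^k[\mu_-]_j$ of the $k$-th row of $\mu_+^{-1}$ with the $j$-th column of $\mu_-$. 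Similarly $R(\lambda)=S^{-1}(\lambda)=E^{-1}\mu_-^{-1}\mu_+E$, so $r_{kj}=[\mu_-^{-1}]^k[\mu_+]_j$.

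Next I would combine this with the column-analyticity already proved: $[\mu_-]_1,[\mu_-]_2,[\mu_-]_3,[\mu_-]_4,[\mu_+]_5$ extend analytically to $C^+$, while $[\mu_+]_1,[\mu_+]_2,[\mu_+]_3,[\mu_+]_4,[\mu_-]_5$ extend analytically to $C^-$. For the rows of the inverse matrices one argues the same way using the Volterra equations satisfied by $\mu_\pm^{-1}$ (equivalently, by taking adjugates and using $\det\mu_\pm=1$): the rows $[\mu_+^{-1}]^1,\dots,[\mu_+^{-1}]^4$ and $[\mu_-^{-1}]^5$ are analytic in $C^+$, whereas $[\mu_-^{-1}]^1,\dots,[\mu_-^{-1}]^4$ and $[\mu_+^{-1}]^5$ are analytic in $C^-$. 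Then for $1\le k,j\le 4$ the entry $s_{kj}=[\mu_+^{-1}]^k[\mu_-]_j$ is a product of a $C^+$-analytic row with a $C^+$-analytic column, hence analytic in $C^+$; and $s_{55}=[\mu_+^{-1}]^5[\mu_-]_5$ is a product of two $C^-$-analytic factors, hence analytic in $C^-$. The claims for $R(\lambda)$ follow identically with the roles of $\mu_+$ and $\mu_-$ interchanged: $r_{kj}=[\mu_-^{-1}]^k[\mu_+]_j$ is analytic in $C^-$ for $1\le k,j\le4$, and $r_{55}=[\mu_-^{-1}]^5[\mu_+]_5$ is analytic in $C^+$.

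I would close by noting the one point that needs a little care: the exponential factors. When one writes $s_{kj}$ as a column-row product of Jost functions, the factor $E$ on the left and $E^{-1}$ on the right cancel for $k,j$ in the same block (both in $\{1,2,3,4\}$ or both equal to $5$), so no spurious $x$-dependence or extra exponential decay/growth appears and the analyticity regions are exactly the intersection of the two half-planes coming from the two factors. For the ``mixed'' entries (one index in $\{1,2,3,4\}$, the other $=5$) there is a genuine oscillatory factor $e^{\pm 2i\lambda x}$ and those entries are in general only defined for $\lambda\in\mathbb{R}$ — which is why the theorem asserts nothing about $s_{15},\dots,s_{45},s_{51},\dots,s_{54}$ and the corresponding $r$'s. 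The main (and only real) obstacle is therefore bookkeeping: correctly matching each Jost column/row to its half-plane of analyticity from \eqref{lax-18} and verifying that the block structure makes the exponential factors cancel for precisely the sixteen $+$ entries plus $s_{55}$ on one side and the sixteen $-$ entries plus $r_{55}$ on the other. Once the products above are written out, the result is immediate from the previously established column analyticity of $\mu_\pm$ and $\mu_\pm^{-1}$.
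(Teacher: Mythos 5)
Your proposal is correct and follows essentially the same route as the paper: write $S(\lambda)=E^{-1}\mu_+^{-1}\mu_-E$ and $R(\lambda)=E^{-1}\mu_-^{-1}\mu_+E$, identify each entry as a row-column product $[\mu_+^{-1}]^k[\mu_-]_j$ (resp. $[\mu_-^{-1}]^k[\mu_+]_j$), and read off the half-plane of analyticity from the previously established analyticity of the Jost columns and of the rows of the inverses. Your extra remark that the conjugation by $E$ cancels exactly on the two diagonal blocks (and survives only in the mixed entries, about which the theorem is silent) is a detail the paper leaves implicit, but it does not change the argument.
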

\begin{proof}
Using Eq.\eqref{lax-9.1} it is easy to verify that $\mu_{\pm}^{-1}$ satisfy the equation of K
\begin{equation}\label{lax-23}
K_{x}=-i\lambda[\Lambda,K]-KP.
\end{equation}
 According to \eqref{lax-20}, it's easy to find
\begin{equation}\label{lax-24}
E^{-1}\mu_{-}^{-1}=R(\lambda)E^{-1}\mu_{+}^{-1}.
\end{equation}
From Eq.\eqref{lax-20}, we have $E^{-1}\mu_{+}^{-1}\mu_{-}E=S(\lambda)$.
\begin{equation*}\label{lax-25}
S(\lambda)=E^{-1}\left(\begin{array}{ccccc}
 [\mu_{+}^{-1}]^{1}[\mu_{-}]_{1}&[\mu_{+}^{-1}]^{1}[\mu_{-}]_{2}&[\mu_{+}^{-1}]^{1}[\mu_{-}]_{3}&[\mu_{+}^{-1}]^{1}[\mu_{-}]_{4}&[\mu_{+}^{-1}]^{1}[\mu_{-}]_{5}\\

  [\mu_{+}^{-1}]^{2}[\mu_{-}]_{1}&[\mu_{+}^{-1}]^{2}[\mu_{-}]_{2}&[\mu_{+}^{-1}]^{2}[\mu_{-}]_{3}&[\mu_{+}^{-1}]^{2}[\mu_{-}]_{4}&[\mu_{+}^{-1}]^{2}[\mu_{-}]_{5}\\

   [\mu_{+}^{-1}]^{3}[\mu_{-}]_{1}&[\mu_{+}^{-1}]^{3}[\mu_{-}]_{2}&[\mu_{+}^{-1}]^{3}[\mu_{-}]_{3}&[\mu_{+}^{-1}]^{3}[\mu_{-}]_{4}&[\mu_{+}^{-1}]^{3}[\mu_{-}]_{5}\\

    [\mu_{+}^{-1}]^{4}[\mu_{-}]_{1}&[\mu_{+}^{-1}]^{4}[\mu_{-}]_{2}&[\mu_{+}^{-1}]^{4}[\mu_{-}]_{3}&[\mu_{+}^{-1}]^{4}[\mu_{-}]_{4}&[\mu_{+}^{-1}]^{4}[\mu_{-}]_{5}\\

     [\mu_{+}^{-1}]^{5}[\mu_{-}]_{1}&[\mu_{+}^{-1}]^{5}[\mu_{-}]_{2}&[\mu_{+}^{-1}]^{5}[\mu_{-}]_{3}&[\mu_{+}^{-1}]^{5}[\mu_{-}]_{4}&[\mu_{+}^{-1}]^{5}[\mu_{-}]_{5}\\
    \end{array}\right)E,
\end{equation*}
according to the analytic property of $\mu_{+}^{-1}$ and $\mu_{-}$, we can proof the theorem. The matrix $R(\lambda)$ can be analyzed in the same way.
\end{proof}

\section{ Riemann-Hilbert problem}

In this part, an RH problem is formulated by using the properties of $\mu_{\pm}$. We construct matrix function $P_{1}=P_{1}(x,\lambda)$ and $P_{2}=P_{2}(x,\lambda)$. The function $P_{1}=P_{1}(x,\lambda)$ is analytic in $C^{+}$, and the function $P_{2}=P_{2}(x,\lambda)$ is analytic in $C^{-}$. 

Let
\begin{equation}\label{lax-26}
P_{1}=\left([\mu_{-}]_{1},[\mu_{-}]_{2},[\mu_{-}]_{3},[\mu_{-}]_{4},[\mu_{+}]_{5}\right),
\end{equation}
\begin{equation}\label{lax-27}
P_{2}=\left(\begin{array}{c}
 [\mu_{-}^{-1}]^{1}\\
 
  [\mu_{-}^{-1}]^{2}\\
  
   [\mu_{-}^{-1}]^{3}\\
   
    [\mu_{-}^{-1}]^{4}\\
    
     [\mu_{+}^{-1}]^{5}\\
    \end{array}\right),
\end{equation}
with
\begin{align}\label{lax-28}
\begin{split}
&P_{1}\rightarrow I,~~~as~~\lambda\rightarrow +\infty,\\
&P_{2}\rightarrow I,~~~as~~\lambda\rightarrow -\infty.
\end{split}
\end{align}

At present, we restrict $P_{1}$ to the left-hand side of the real $\lambda$-axis as $P_{+}$, and the restrict $P_{2}$ to the right-hand side of the real $\lambda$-axis as $P_{-}$. On the real line, they are meet
 \begin{equation}\label{lax-29}
P_{-}(x,\lambda)P_{+}(x,\lambda)=G(x,\lambda),~~~\lambda\in R,
\end{equation}
with
\begin{equation}\label{lax-30}
G(x,\lambda)=\left(\begin{array}{ccccc}
   1 & 0 & 0 & 0 & r_{15}e^{-2i\lambda x} \\
  0 &1 & 0 & 0 & r_{25}e^{-2i\lambda x} \\
   0 & 0 & 1 & 0 & r_{35}e^{-2i\lambda x} \\
   0 & 0  & 0 & 1 & r_{45}e^{-2i\lambda x} \\
    s_{51}e^{2i\lambda x} & s_{52}e^{2i\lambda x} & s_{53}e^{2i\lambda x} & s_{54}e^{2i\lambda x} &1\\
    \end{array}\right).
\end{equation}
According to Eq.\eqref{lax-28}, we obtain the canonical normalization conditions as follows
\begin{align*}
&P_{1}\rightarrow I,~~~as~~\lambda\rightarrow +\infty,\\
&P_{2}\rightarrow I,~~~as~~\lambda\rightarrow -\infty.
\end{align*}
To solve the RH problem, we consider the following theorem.
\begin{prop}
\begin{align*}
&\det P_{1}=r_{55},~~~~\lambda\in C^{+},\\
&\det P_{2}=s_{55},~~~~\lambda\in C^{-}.
\end{align*}
\end{prop}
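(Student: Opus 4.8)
The plan is to express the two ``foreign'' entries that appear in the definitions \eqref{lax-26}--\eqref{lax-27}, namely the column $[\mu_+]_5$ and the row $[\mu_+^{-1}]^5$, in terms of the columns of $\mu_-$ (respectively the rows of $\mu_-^{-1}$) by means of the scattering relation, and then to read off the determinant using multilinearity together with $\det\mu_\pm=1$ from \eqref{lax-19}.

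Concretely, for the first identity I would rewrite \eqref{lax-20} as $\mu_+=\mu_-ERE^{-1}$ with $R=S^{-1}$ the matrix of \eqref{lax-24}, and apply it to the last standard basis vector. Since $E=e^{-i\lambda\Lambda x}$ is diagonal with $\Lambda_{55}=-1$, the conjugation turns $r_{k5}$ into $r_{k5}e^{-2i\lambda x}$ for $k=1,\dots,4$ and leaves $r_{55}$ unchanged, giving
\[
[\mu_+]_5=\sum_{k=1}^{4} r_{k5}e^{-2i\lambda x}\,[\mu_-]_k+r_{55}\,[\mu_-]_5 .
\]
Substituting this into \eqref{lax-26}, the first four terms are repeated columns and drop out by multilinearity of the determinant, leaving $\det P_1=r_{55}\,\det([\mu_-]_1,\dots,[\mu_-]_5)=r_{55}\det\mu_-=r_{55}$ on $C^+$. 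Equivalently one can write $P_1=\mu_-T_1$, where $T_1$ is upper triangular with diagonal $(1,1,1,1,r_{55})$ and with $(T_1)_{k5}=r_{k5}e^{-2i\lambda x}$ for $1\le k\le4$, so $\det P_1=\det\mu_-\det T_1=r_{55}$.

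The second identity is handled symmetrically: from \eqref{lax-24}, $\mu_+^{-1}=ESE^{-1}\mu_-^{-1}$, and extracting the fifth row gives
\[
[\mu_+^{-1}]^5=\sum_{k=1}^{4} s_{5k}e^{2i\lambda x}\,[\mu_-^{-1}]^k+s_{55}\,[\mu_-^{-1}]^5 ,
\]
where now the surviving oscillatory factors are $e^{2i\lambda x}$ attached to $s_{51},\dots,s_{54}$ while $s_{55}$ is untouched. Substituting into \eqref{lax-27} and again discarding the repeated rows yields $\det P_2=s_{55}\det\mu_-^{-1}=s_{55}$ on $C^-$; equivalently $P_2=T_2\mu_-^{-1}$ with $T_2$ lower triangular of diagonal $(1,1,1,1,s_{55})$.

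I expect the only point requiring care to be the bookkeeping of the exponential factors generated by the conjugation $X\mapsto EXE^{-1}$: one must verify that it is exactly the off-diagonal fifth-row and fifth-column entries that pick up the factors $e^{\pm2i\lambda x}$ while the $(5,5)$ entry stays invariant, which is precisely what makes the auxiliary matrices triangular with the stated determinants. As consistency checks, the analyticity of $r_{55}$ in $C^+$ and of $s_{55}$ in $C^-$ established in the earlier theorem matches the analyticity domains of $P_1$ and $P_2$, and the very same two relations reproduce the jump matrix in \eqref{lax-30} through $P_2P_1=T_2T_1=G$ once one uses $(SR)_{55}=1$.
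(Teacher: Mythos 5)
Your proposal is correct and follows essentially the same route as the paper: the paper likewise writes $P_1=\mu_-\bigl(H_1+H_2+H_3+H_4+ER(\lambda)E^{-1}H_5\bigr)$, factors out $\det\mu_-=1$, and reads off the determinant of the resulting upper-triangular factor with diagonal $(1,1,1,1,r_{55})$, with the symmetric computation for $P_2$. Your bookkeeping of the $e^{\pm 2i\lambda x}$ factors and the consistency check $T_2T_1=G$ are both accurate.
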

\begin{proof}
According to Eq.\eqref{lax-26} and Eq.\eqref{lax-27}, we write $P_{1}$ and $P_{2}$ in the form of
\begin{equation}\label{lax-31}
\begin{split}
&P_{1}=\mu_{-}H_{1}+\mu_{-}H_{2}+\mu_{-}H_{3}+\mu_{-}H_{4}+\mu_{+}H_{5},\\
&P_{2}=H_{1}\mu_{-}^{-1}+H_{2}\mu_{-}^{-1}+H_{3}\mu_{-}^{-1}+H_{4}\mu_{-}^{-1}+H_{5}\mu_{+}^{-1},
\end{split}
\end{equation}
where $H_{1}=\mbox{dig}(1,0,0,0,0)$, $H_{2}=\mbox{dig}(0,1,0,0,0)$, $H_{3}=\mbox{dig}(0,0,1,0,0)$, $H_{4}=\mbox{dig}(0,0,0,1,0)$ and $H_{5}=\mbox{dig}(0,0,0,0,1)$. Hence
\begin{equation}\label{lax-32}
\begin{split}
\det P_{1}(\lambda)&=\det(\mu_{-}H_{1}+\mu_{-}H_{2}+\mu_{-}H_{3}+\mu_{-}H_{4}+\mu_{+}H_{5}) \\
&=\det\mu_{-}\cdot \det(H_{1}+H_{2}+H_{3}+H_{4}+ER(\lambda)E^{-1}H_{5})\\
&=1\cdot \det\left(\begin{array}{ccccc}
   1 & 0 & 0 & 0 & r_{15}e^{-2i\lambda x} \\
  0 &1 & 0 & 0 & r_{25}e^{-2i\lambda x} \\
   0 & 0 & 1 & 0 & r_{35}e^{-2i\lambda x} \\
   0 & 0 & 0 & 1 & r_{45}e^{-2i\lambda x} \\
    0 & 0& 0 & 0 &r_{55}\\
    \end{array}\right)\\
    &=r_{55},
\end{split}
\end{equation}
apply the same method to $P_{2}(\lambda)$,
\begin{equation}\label{lax-34}
\det P_{2}(\lambda)=s_{55}.
\end{equation}
From Eq.\eqref{lax-19}, we know that $\det\mu_{\pm}=1$, according to above analysis, we can get $\det P_{1}=r_{55}, \lambda\in C^{+}$, and $\det P_{2}=s_{55}, \lambda\in C^{-}$.
\end{proof}
As we can see matrix $P$ has the symmetry relation
\begin{equation}\label{lax-35}
P^{\dag}=-BPB^{-1},
\end{equation}
symbol $``\dag"$ represents the Hermitian of a matrix, and
\begin{equation}\label{lax-36}
B=\left(\begin{array}{ccccc}
   a_{11} & a_{21}^{\ast} & a_{31}^{\ast} & a_{41}^{\ast} & 0 \\
  a_{21} &a_{22} & a_{32}^{\ast} & a_{42}^{\ast} & 0 \\
   a_{31} & a_{32} & a_{33} & a_{43}^{\ast} & 0 \\
    a_{41}& a_{42} & a_{43} & a_{44} & 0 \\
    0 & 0 & 0 & 0 & -1\\
    \end{array}\right).
\end{equation}

According to Eq.\eqref{lax-9.1} and Eq.\eqref{lax-35}, $\mu_{\pm}$  meet the following relation
\begin{equation}\label{lax-37}
B^{-1}\mu_{\pm}^{\dagger}(\lambda^{\ast})B=\mu_{\pm}^{-1}(\lambda),
\end{equation}
the scattering matrix $S(\lambda)$ satisfies the equation
\begin{equation}\label{lax-38}
B^{-1}S^{\dagger}(\lambda^{\ast})B=S^{-1}(\lambda)=R(\lambda).
\end{equation}
Eq.\eqref{lax-38} evidently shows
\begin{align}\label{lax-39}
\begin{split}
r_{55}(\lambda)&=s_{55}^{\ast}(\lambda^{\ast}),~~~\lambda\in C^{+},\\
-s_{51}^{*}(\lambda)&=a_{11}r_{15}+a_{21}^{*}r_{25}+a_{31}^{*}r_{35}+a_{41}^{*}r_{45},~~~\lambda\in R,\\
-s_{52}^{*}(\lambda)&=a_{21}r_{15}+a_{22}r_{25}+a_{32}^{*}r_{35}+a_{42}^{*}r_{45},~~~\lambda\in R,\\
-s_{53}^{*}(\lambda)&=a_{31}r_{15}+a_{32}r_{25}+a_{33}r_{35}+a_{43}^{*}r_{45},~~~\lambda\in R,\\
-s_{54}^{*}(\lambda)&=a_{41}r_{15}+a_{42}r_{25}+a_{43}r_{35}+a_{44}r_{45},~~~\lambda\in R.
\end{split}
\end{align}

\begin{thm}
\begin{equation}\label{lax-44}
P_{1}^{\dagger}(\lambda^{\ast})=BP_{2}(\lambda)B^{-1},~~~\lambda\in C^{-}.
\end{equation}
\end{thm}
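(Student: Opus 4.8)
The plan is to derive the symmetry relation \eqref{lax-44} directly from the definitions \eqref{lax-26}, \eqref{lax-27} of $P_{1}$ and $P_{2}$ together with the already-established involution property \eqref{lax-37} of the Jost functions $\mu_{\pm}$. First I would write out $P_{1}^{\dagger}(\lambda^{\ast})$ column by column: since $P_{1}=\left([\mu_{-}]_{1},[\mu_{-}]_{2},[\mu_{-}]_{3},[\mu_{-}]_{4},[\mu_{+}]_{5}\right)$, taking the Hermitian conjugate turns these columns into rows, so $P_{1}^{\dagger}(\lambda^{\ast})$ has rows $[\mu_{-}]_{1}^{\dagger}(\lambda^{\ast}),\dots,[\mu_{-}]_{4}^{\dagger}(\lambda^{\ast}),[\mu_{+}]_{5}^{\dagger}(\lambda^{\ast})$. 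The key observation is that the $\dagger$ of the $j$-th column of a matrix is the $j$-th row of the $\dagger$ of that matrix, i.e. $[\mu_{-}]_{j}^{\dagger}(\lambda^{\ast}) = [\mu_{-}^{\dagger}(\lambda^{\ast})]^{j}$. Then \eqref{lax-37} gives $\mu_{\pm}^{\dagger}(\lambda^{\ast}) = B\,\mu_{\pm}^{-1}(\lambda)\,B^{-1}$, so each such row becomes the corresponding row of $B\mu_{-}^{-1}(\lambda)B^{-1}$ (respectively $B\mu_{+}^{-1}(\lambda)B^{-1}$ for the fifth).

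Next I would exploit the block structure of $B$ in \eqref{lax-36}: $B$ is block-diagonal with a $4\times4$ block $\tilde B=(a_{jk})$-type matrix in the upper-left corner and the scalar $-1$ in the $(5,5)$ entry, and all mixed entries vanishing. Consequently, for any matrix $A$, the first four rows of $BAB^{-1}$ depend only on the first four rows of $AB^{-1}$, and since left-multiplication by the block-diagonal $B$ does not mix the fifth row with the first four, the operation "take rows $1$–$4$" commutes appropriately with conjugation by $B$. More precisely, I would verify that the matrix whose first four rows are those of $B\mu_{-}^{-1}(\lambda)B^{-1}$ and whose fifth row is that of $B\mu_{+}^{-1}(\lambda)B^{-1}$ equals $B\,P_{2}(\lambda)\,B^{-1}$, where $P_{2}$ is exactly the matrix built from rows $[\mu_{-}^{-1}]^{1},\dots,[\mu_{-}^{-1}]^{4},[\mu_{+}^{-1}]^{5}$ as in \eqref{lax-27}. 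This works precisely because $B$ being block-diagonal (with the $5$-th coordinate decoupled) means $B(\,\cdot\,)B^{-1}$ acts "row-wise compatibly" with the splitting of rows into the group $\{1,2,3,4\}$ and the singleton $\{5\}$ — conjugation by $B$ sends a matrix assembled from a prescribed list of rows to the matrix assembled from the conjugated rows.

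Then I would assemble these pieces: $P_{1}^{\dagger}(\lambda^{\ast})$ has rows equal to rows $1$–$4$ of $B\mu_{-}^{-1}(\lambda)B^{-1}$ and row $5$ of $B\mu_{+}^{-1}(\lambda)B^{-1}$, which by the block-diagonal argument equals $B\cdot\big(\text{matrix with rows }[\mu_{-}^{-1}]^{1},\dots,[\mu_{-}^{-1}]^{4},[\mu_{+}^{-1}]^{5}\big)\cdot B^{-1} = B\,P_{2}(\lambda)\,B^{-1}$, giving \eqref{lax-44}. Finally I would check the domain of validity: $P_{1}$ is analytic in $C^{+}$, so $P_{1}^{\dagger}(\lambda^{\ast})$ is analytic where $\lambda^{\ast}\in C^{+}$, i.e. $\lambda\in C^{-}$, which matches the stated region and is consistent with $P_{2}$ being analytic in $C^{-}$; this is why the relation is asserted on $C^{-}$.

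The main obstacle — really the only subtle point — is the bookkeeping in the second step: carefully justifying that conjugation by the block-diagonal matrix $B$ interchanges correctly with the "hybrid" row-selection used to define $P_{2}$ (four rows from $\mu_{-}^{-1}$, one row from $\mu_{+}^{-1}$), so that one does not accidentally mix the fifth row into the first four or vice versa. Once the block structure of $B$ in \eqref{lax-36} is used to note that the $(5,j)$ and $(j,5)$ entries vanish for $j\le 4$, this reduces to a transparent verification, and the rest is a direct consequence of \eqref{lax-37}. One should also double-check the projector identity $P_{1}=\mu_{-}(H_{1}+H_{2}+H_{3}+H_{4})+\mu_{+}H_{5}$ from \eqref{lax-31} and the analogous one for $P_{2}$, since phrasing the argument through the diagonal projectors $H_{k}$ makes the commutation with $B$ manifest: $H_{k}B = BH_{k}$ for every $k$ because $B$ is block-diagonal with respect to the same coordinate splitting.
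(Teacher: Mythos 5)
Your argument is correct and is essentially the paper's own proof: the paper writes $P_{1}=\mu_{-}(H_{1}+H_{2}+H_{3}+H_{4})+\mu_{+}H_{5}$, takes the Hermitian conjugate, and applies \eqref{lax-37}, which is exactly your row-by-row computation in projector form. One correction to your closing remark, though: the individual commutations $H_{k}B=BH_{k}$ are \emph{false} for $k=1,2,3,4$, because the upper-left $4\times4$ block of $B$ in \eqref{lax-36} is not diagonal (for instance $H_{1}B$ retains the first \emph{row} of $B$ while $BH_{1}$ retains its first \emph{column}). What is true, and what both your row-wise argument and the paper's computation actually use, is that the sum $H_{1}+H_{2}+H_{3}+H_{4}=\mathrm{diag}(1,1,1,1,0)$ and the projector $H_{5}$ each commute with $B$, since $B$ is block-diagonal with respect to the splitting $\{1,2,3,4\}\oplus\{5\}$; because all four of the first projectors multiply the same matrix $\mu_{-}^{-1}$, only their sum ever enters, and the argument goes through unchanged.
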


\begin{proof}
According to Eq.\eqref{lax-31}, we have $P_{1}=\mu_{-}H_{1}+\mu_{-}H_{2}+\mu_{-}H_{3}+\mu_{-}H_{4}+\mu_{+}H_{5}$,
\begin{align}\label{lax-43}
\begin{split}
P_{1}^{\dagger}(\lambda^{\ast})=&(\mu_{-}(\lambda^{\ast})H_{1}+\mu_{-}(\lambda^{\ast})H_{2}+\mu_{-}(\lambda^{\ast})H_{3}+\mu_{-}(\lambda^{\ast})H_{4}+\mu_{+}(\lambda^{\ast})H_{5})^{\dagger}\\
=&H_{1}\mu_{-}^{\dagger}(\lambda^{\ast})+H_{2}\mu_{-}^{\dagger}(\lambda^{\ast})+H_{3}\mu_{-}^{\dagger}(\lambda^{\ast})+H_{4}\mu_{-}^{\dagger}(\lambda^{\ast})+H_{5}\mu_{+}^{\dagger}(\lambda^{\ast})\\
=&BP_{2}(\lambda)B^{-1},~~\lambda\in C^{-}.
\end{split}
\end{align}
So that
\begin{equation*}
P_{1}^{\dagger}(\lambda^{\ast})=BP_{2}(\lambda)B^{-1},~~\lambda\in C^{-}.
\end{equation*}
\end{proof}

From Eq.\eqref{lax-32}, Eq.\eqref{lax-34} and Eq.\eqref{lax-39}, we see $\det P_{1}(\lambda)=(\det P_{2}(\lambda^{*}))^{*}$, if $\det P_{1}$ have a zero $\lambda$, $\det P_{2}$ have a zero $\lambda^{*}$. So we suppose that $\det P_{1}$ has N simple zeros $\{\lambda_{j}\}_{1}^{N}$ in $C^{+}$, and $\det P_{2}$ has N simple zeros $\{\lambda^{*}_{j}\}_{1}^{N}$ in $C^{-}$. These zeros with the nonzero vectors $v_{j}$ and $\hat{v}_{j}$, set up of the full generic discrete data, which satisfy the equations
\begin{align}\label{lax-45}
\begin{split}
&P_{1}(\lambda_{j})v_{j}=0,\\
&\hat{v}_{j}P_{2}(\lambda^{*})=0,
\end{split}
\end{align}
where $v_{j}$ is column vector, and $\hat{v}_{j}$ is the row vector. From Eq.\eqref{lax-44}, Eqs.\eqref{lax-45} one obtains that the eigenvectors admit the following relation.
\begin{equation}\label{lax-47}
\hat{v}_{j}=v_{j}^{\dagger}B,~~~1\leq j \leq N.
\end{equation}
 Then we analyze the time-spatial revolution with $v_{j}$. We take the derivative of the first equation of Eqs.\eqref{lax-45} with respect to x, apply the same method to t.
\begin{align}\label{lax-48}
\begin{split}
&P_{1,x}v_{j}+P_{1}v_{j,x}=0,\\
&P_{1,t}v_{j}+P_{1}v_{j,t}=0.
\end{split}
\end{align}
On the basis of
\begin{align*}\label{lax-50}
P_{1,x}&=(\mu_{-}H_{1}+\mu_{-}H_{2}+\mu_{-}H_{3}+\mu_{+}H_{4})_{x}\\
&=\mu_{-,x}H_{1}+\mu_{-,x}H_{2}+\mu_{-,x}H_{3}+\mu_{+,x}H_{4},\\
\end{align*}
and the Lax pair Eqs.\eqref{lax-9.1}, we have
\begin{equation}\label{lax-51}
\begin{split}
P_{1,x}=&[-i\lambda(\Lambda\mu_{-}-\mu_{-}\Lambda)+P\mu_{-}]H_{1}
+[-i\lambda(\Lambda\mu_{-}-\mu_{-}\Lambda)+P\mu_{-}]H_{2}\\
&+[-i\lambda(\Lambda\mu_{-}-\mu_{-}\Lambda)+P\mu_{-}]H_{3}
+[-i\lambda(\Lambda\mu_{-}-\mu_{-}\Lambda)+P\mu_{-}]H_{4}\\
&+[-i\lambda(\Lambda\mu_{+}-\mu_{+}\Lambda)+P\mu_{+}]H_{5}\\
=&-i\lambda\Lambda P_{1}+i\lambda P_{1}\Lambda+PP_{1}\\
=&-i\lambda[\Lambda,P_{1}]+PP_{1}.
\end{split}
\end{equation}
Applying the same method to $P_{1,t}$, we get
\begin{equation}\label{lax-52}
P_{1,t}=-2i\lambda^{2}[\Lambda,P_{1}]+QP_{1}.
\end{equation}
Inserting Eq.\eqref{lax-51} and Eq.\eqref{lax-52} into the first equation of Eqs.\eqref{lax-48} and the second equation of Eqs.\eqref{lax-48}, respectively. Noticing that $P_{1}v_{j}=0$, we have
\begin{align}\label{lax-53}
\begin{split}
&i\lambda\Lambda v_{j}+v_{j,x}=0,\\
&2i\lambda^{2}\Lambda v_{j}+v_{j,t}=0.
\end{split}
\end{align}
According to Eqs.\eqref{lax-53}, we have
\begin{equation*}\label{lax-55}
v_{j}=e^{-i(\lambda_{j}x+2\lambda_{j}^{2}t)\Lambda}v_{j,0},
\end{equation*}
where $v_{j,0}$ are complex constant vectors. From Eq.\eqref{lax-47}, we have
\begin{equation*}\label{lax-56}
\hat{v}_{j}=v_{j}^{\dagger}(\lambda_{j})B=v^{\dagger}_{j,0}e^{i(\lambda^{\ast}_{j}x+2\lambda^{\ast2}_{j}t)\Lambda}B.
\end{equation*}

\section{ Multi-soliton solutions}
Now, we are going to expand $P_{1}(\lambda)$ at large-$\lambda$ as
\begin{equation}\label{lax-57}
P_{1}(\lambda)=I+\frac{P_{1}^{(1)}}{\lambda}+\frac{P_{1}^{(2)}}{\lambda^{2}}+o(\frac{1}{\lambda^{3}}),~~~\lambda\rightarrow\infty.
\end{equation}
Inserting Eq.\eqref{lax-57} into Eq.\eqref{lax-9.1}
\begin{equation}\label{lax-58}
o(1)~:~i[\Lambda,P_{1}^{(1)}]=P.
\end{equation}
From Eq.\eqref{lax-58}, we can generate
\begin{align}\label{lax-59}
\begin{split}
&q_{1}(x,t)=2i(P_{1}^{(1)})_{15},\\
&q_{2}(x,t)=2i(P_{1}^{(1)})_{25},\\
&q_{3}(x,t)=2i(P_{1}^{(1)})_{35},\\
&q_{4}(x,t)=2i(P_{1}^{(1)})_{45},
\end{split}
\end{align}
where $(P_{1}^{(1)})_{ij}$ is the $(i,j)$-entry of matrix $P_{1}^{(1)}$.

To obtain soliton solutions, we set $G=I$ in \eqref{lax-29}. The solutions for this special RH problem \eqref{lax-29} can be given as
\begin{align}\label{lax-62}
\begin{split}
&P_{1}(\lambda)=I-\sum_{k=1}^{N}\sum_{j=1}^{N}\frac{v_{k}\hat{v}_{j}(M^{-1})_{kj}}{\lambda-\hat{\lambda_{j}}},\\
&P_{2}(\lambda)=I+\sum_{k=1}^{N}\sum_{j=1}^{N}\frac{v_{k}\hat{v}_{j}(M^{-1})_{kj}}{\lambda-\lambda_{j}},
\end{split}
\end{align}
where $M$ is a $N \times N$ matrix with entries
\begin{equation}\label{lax-64}
M_{kj}=\frac{\hat{v}_{k}v_{j}}{\lambda_{j}-\hat{\lambda}_{k}},
\end{equation}
and $(M^{-1})_{kj}$ means the $(k,j)$-entry of the inverse matrix of $M$. From expression Eqs.\eqref{lax-62}, one has
\begin{equation}\label{lax-65}
P_{1}^{(1)}=-\sum_{k=1}^{N}\sum_{j=1}^{N}v_{k}\hat{v}_{j}(M^{-1})_{kj}.
\end{equation}
Then setting nonzero vectors $v_{k,0}=(\alpha_{k},\beta_{k},\tau_{k},\zeta_{k},\gamma_{k})^{T}$ and $\theta_{k}=-i(\lambda_{k}x+2\lambda_{k}^{2}t)$, we generate
\begin{equation}\label{lax-66}
v_{k}=e^{\theta_{k}\Lambda}v_{k,0}=\left(\begin{array}{ccccc}
   e^{\theta_{k}} & 0 & 0 & 0 & 0 \\
  0 &e^{\theta_{k}} & 0 & 0 & 0 \\
   0 & 0 & e^{\theta_{k}} & 0 & 0 \\
   0 & 0 & 0 & e^{\theta_{k}} & 0 \\
    0 & 0 & 0 & 0 &e^{-\theta_{k}}\\
    \end{array}\right)\left(\begin{array}{c}
   \alpha_{k}  \\
  \beta_{k}  \\
   \tau_{k}  \\
   \zeta_{k}  \\
    \gamma_{k} \\
    \end{array}\right)=\left(\begin{array}{c}
   \alpha_{k}e^{\theta_{k}}  \\
  \beta_{k}e^{\theta_{k}}  \\
   \tau_{k}e^{\theta_{k}}  \\
   \zeta_{k}e^{\theta_{k}}  \\
    \gamma_{k}e^{-\theta_{k}} \\
    \end{array}\right),
\end{equation}
\begin{equation}\label{lax-67}
\begin{split}
\hat{v}_{j}=&v_{j}^{\dagger}(\lambda_{j})B\\
=&(\alpha_{k}e^{\theta_{k}},\beta_{k}e^{\theta_{k}},\tau_{k}e^{\theta_{k}},\zeta_{k}e^{\theta_{k}},\gamma_{k}e^{-\theta_{k}})
    \left(\begin{array}{ccccc}
   a_{11} & a_{21}^{\ast} & a_{31}^{\ast} & a_{41}^{\ast} & 0 \\
  a_{21} &a_{22} & a_{32}^{\ast} & a_{42}^{\ast} & 0 \\
   a_{31} & a_{32} & a_{33} & a_{43}^{\ast} & 0 \\
    a_{41}& a_{42} & a_{43} & a_{44} & 0 \\
    0 & 0 & 0 & 0 & -1\\
    \end{array}\right)\\
=&
   (a_{11}\alpha_{j}^{\ast}e^{\theta_{j}^{\ast}}+a_{21}\beta_{j}^{\ast}e^{\theta_{j}^{\ast}}+a_{31}\tau_{j}^{\ast}e^{\theta_{j}^{\ast}}+a_{41}\zeta_{j}^{\ast}e^{\theta_{j}^{\ast}},a_{21}^{\ast}\alpha_{j}^{\ast}e^{\theta_{j}^{\ast}}+a_{22}\beta_{j}^{\ast}e^{\theta_{j}^{\ast}}+a_{32}\tau_{j}^{\ast}e^{\theta_{j}^{\ast}}\\&+a_{42}\zeta_{j}^{\ast}e^{\theta_{j}^{\ast}}, a_{31}^{\ast}\alpha_{j}^{\ast}e^{\theta_{j}^{\ast}}+a_{32}^{\ast}\beta_{j}^{\ast}e^{\theta_{j}^{\ast}}+a_{33}\tau_{j}^{\ast}e^{\theta_{j}^{\ast}}+a_{43}\zeta_{j}^{\ast}e^{\theta_{j}^{\ast}},a_{41}^{\ast}\alpha_{j}^{\ast}e^{\theta_{j}^{\ast}}+a_{42}^{\ast}\beta_{j}^{\ast}e^{\theta_{j}^{\ast}}\\&+a_{43}^{\ast}\tau_{j}^{\ast}e^{\theta_{j}^{\ast}}+a_{44}\zeta_{j}^{\ast}e^{\theta_{j}^{\ast}}, -\gamma_{j}^{\ast}e^{-\theta_{j}^{\ast}}).
\end{split}
\end{equation}
Obviously
\begin{equation}\label{lax-68}
v_{k}\hat{v}_{j}=\left(\begin{array}{ccccc}
   b_{11}&b_{12} &b_{13} &b_{14} &b_{15}\\
   b_{21}&b_{22} &b_{23} &b_{24} &b_{25}\\
   b_{31}&b_{32} &b_{33} &b_{34} &b_{35}\\
   b_{41}&b_{42} &b_{43} &b_{44} &b_{45}\\
   b_{51}&b_{52} &b_{53} &b_{54} &b_{55}\\
    \end{array}\right),
\end{equation}
\begin{equation}\label{lax-69}
\begin{split}
\hat{v}_{k}v_{j}=&(a_{11}\alpha_{k}^{\ast}\alpha_{j}+a_{21}\beta_{k}^{\ast}\alpha_{j}+a_{31}\tau_{k}^{\ast}\alpha_{j}+a_{41}\zeta_{k}^{\ast}\alpha_{j}+a_{21}^{\ast}\alpha_{k}^{\ast}\beta_{j}+a_{22}\beta_{k}^{\ast}\beta_{j}+a_{32}\tau_{k}^{\ast}\beta_{j}\\
&+a_{42}\zeta_{k}^{\ast}\beta_{j}+a_{31}^{\ast}\alpha_{k}^{\ast}\tau_{j}+a_{32}^{\ast}\beta_{k}^{\ast}\tau_{j}+a_{33}\tau_{k}^{\ast}\tau_{j}+a_{43}\zeta_{k}^{\ast}\tau_{j}+a_{41}^{\ast}\alpha_{k}^{\ast}\zeta_{j}+a_{42}^{\ast}\beta_{k}^{\ast}\zeta_{j}\\
&+a_{43}^{\ast}\tau_{k}^{\ast}\zeta_{j}+a_{44}\zeta_{k}^{\ast}\zeta_{j})e^{\theta_{k}^{\ast}+\theta_{j}}-\gamma_{k}^{\ast}\gamma_{j}e^{-\theta_{k}^{\ast}-\theta_{j}}.
\end{split}
\end{equation}
\begin{equation*}
\begin{split}
&b_{15}=-\alpha_{k}\gamma_{j}^{\ast}e^{\theta_{k}-\theta_{j}^{\ast}},~~~~b_{25}=-\beta_{k}\gamma_{j}^{\ast}e^{\theta_{k}-\theta_{j}^{\ast}}~~~~b_{35}=-\tau_{k}\gamma_{j}^{\ast}e^{\theta_{k}-\theta_{j}^{\ast}},~~~~b_{45}=-\zeta_{k}\gamma_{j}^{\ast}e^{\theta_{k}-\theta_{j}^{\ast}},\\
&b_{55}=-\gamma_{k}\gamma_{j}^{\ast}e^{\theta_{k}-\theta_{j}^{\ast}}.
\end{split}
\end{equation*}
It should be noted that the parameter $b_{ij} ~(i\leq5,~j\leq4)$ do not work on the construction solutions, so the specific expression is not given for convenience.

As a consequence, general $N$-soliton solution for the FCNLS equation \eqref{NLS-1} can be derived as follows
\begin{align}\label{lax-70}
\begin{split}
&q_{1}=2i\sum_{k=1}^{N}\sum_{j=1}^{N}\alpha_{k}\gamma_{j}^{\ast}e^{\theta_{k}-\theta_{j}^{\ast}}(M^{-1})_{kj},\\
&q_{2}=2i\sum_{k=1}^{N}\sum_{j=1}^{N}\beta_{k}\gamma_{j}^{\ast}e^{\theta_{k}-\theta_{j}^{\ast}}(M^{-1})_{kj},\\
&q_{3}=2i\sum_{k=1}^{N}\sum_{j=1}^{N}\tau_{k}\gamma_{j}^{\ast}e^{\theta_{k}-\theta_{j}^{\ast}}(M^{-1})_{kj},\\
&q_{4}=2i\sum_{k=1}^{N}\sum_{j=1}^{N}\zeta_{k}\gamma_{j}^{\ast}e^{\theta_{k}-\theta_{j}^{\ast}}(M^{-1})_{kj},
\end{split}
\end{align}
where
\begin{equation}\label{lax-73}
\begin{split}
M_{kj}=&\frac{1}{\lambda_{j}-\hat{\lambda}_{k}}[(a_{11}\alpha_{k}^{\ast}\alpha_{j}+a_{21}\beta_{k}^{\ast}\alpha_{j}+a_{31}\tau_{k}^{\ast}\alpha_{j}+a_{41}\zeta_{k}^{\ast}\alpha_{j}+a_{21}^{\ast}\alpha_{k}^{\ast}\beta_{j}+a_{22}\beta_{k}^{\ast}\beta_{j}\\
&+a_{32}\tau_{k}^{\ast}\beta_{j}+a_{42}\zeta_{k}^{\ast}\beta_{j}+a_{31}^{\ast}\alpha_{k}^{\ast}\tau_{j}+a_{32}^{\ast}\beta_{k}^{\ast}\tau_{j}+a_{33}\tau_{k}^{\ast}\tau_{j}+a_{43}\zeta_{k}^{\ast}\tau_{j}+a_{41}^{\ast}\alpha_{k}^{\ast}\zeta_{j}\\
&+a_{42}^{\ast}\beta_{k}^{\ast}\zeta_{j}+a_{43}^{\ast}\tau_{k}^{\ast}\zeta_{j}+a_{44}\zeta_{k}^{\ast}\zeta_{j})e^{\theta_{k}^{\ast}+\theta_{j}}-\gamma_{k}^{\ast}\gamma_{j}e^{-\theta_{k}^{\ast}-\theta_{j}}],
1\leq k,j \leq N.
\end{split}
\end{equation}
To make the expression \eqref{lax-70} simpler,  we define the following matrix $F$, $G$, $H$ and $K$.
\begin{equation}\label{lax-74}
F=\left(\begin{array}{ccccc}
   0 & \alpha_{1}e^{\theta_{1}} & \alpha_{2}e^{\theta_{2}} & \ldots &\alpha_{N}e^{\theta_{N}}\\
  \gamma_{1}^{\ast}e^{-\theta_{1}^{\ast}} &M_{11} & M_{12} & \ldots &M_{1N}\\
   \gamma_{2}^{\ast}e^{-\theta_{2}^{\ast}} & M_{21} & M_{22} & \ldots &M_{2N}\\
    \vdots & \vdots & \vdots &\vdots&\vdots\\
   \gamma_{N}^{\ast}e^{-\theta_{N}^{\ast}} & M_{N1} & M_{N2} &\ldots&M_{NN}\\
    \end{array}\right),
\end{equation}
\begin{equation}\label{lax-75}
G=\left(\begin{array}{ccccc}
   0 & \beta_{1}e^{\theta_{1}} & \beta_{2}e^{\theta_{2}} & \ldots &\beta_{N}e^{\theta_{N}}\\
  \gamma_{1}^{\ast}e^{-\theta_{1}^{\ast}} &M_{11} & M_{12} & \ldots &M_{1N}\\
   \gamma_{2}^{\ast}e^{-\theta_{2}^{\ast}} & M_{21} & M_{22} & \ldots &M_{2N}\\
    \vdots & \vdots & \vdots &\vdots&\vdots\\
   \gamma_{N}^{\ast}e^{-\theta_{N}^{\ast}} & M_{N1} & M_{N2} &\ldots&M_{NN}\\
    \end{array}\right),
\end{equation}
\begin{equation}\label{lax-76}
H=\left(\begin{array}{ccccc}
   0 & \tau_{1}e^{\theta_{1}} & \tau_{2}e^{\theta_{2}} & \ldots &\tau_{N}e^{\theta_{N}}\\
  \gamma_{1}^{\ast}e^{-\theta_{1}^{\ast}} &M_{11} & M_{12} & \ldots &M_{1N}\\
   \gamma_{2}^{\ast}e^{-\theta_{2}^{\ast}} & M_{21} & M_{22} & \ldots &M_{2N}\\
    \vdots & \vdots & \vdots &\vdots&\vdots\\
   \gamma_{N}^{\ast}e^{-\theta_{N}^{\ast}} & M_{N1} & M_{N2} &\ldots&M_{NN}\\
    \end{array}\right).
\end{equation}
\begin{equation}\label{lax-761}
K=\left(\begin{array}{ccccc}
   0 & \zeta_{1}e^{\theta_{1}} & \zeta_{2}e^{\theta_{2}} & \ldots &\zeta_{N}e^{\theta_{N}}\\
  \gamma_{1}^{\ast}e^{-\theta_{1}^{\ast}} &M_{11} & M_{12} & \ldots &M_{1N}\\
   \gamma_{2}^{\ast}e^{-\theta_{2}^{\ast}} & M_{21} & M_{22} & \ldots &M_{2N}\\
    \vdots & \vdots & \vdots &\vdots&\vdots\\
   \gamma_{N}^{\ast}e^{-\theta_{N}^{\ast}} & M_{N1} & M_{N2} &\ldots&M_{NN}\\
    \end{array}\right).
\end{equation}
 On the basis of matrix \eqref{lax-74}, \eqref{lax-75}, \eqref{lax-76} and \eqref{lax-761}, we can get
\begin{equation}\label{lax-77}
q_{1}=-2i\frac{\det F}{\det M},~~q_{2}=-2i\frac{\det G}{\det M},~~q_{3}=-2i\frac{\det H}{\det M},~~q_{4}=-2i\frac{\det K}{\det M}.
\end{equation}

In the remainder of this section, we figure out the soliton solutions in the case of $N=1$ and $N=2$. In the case of $N=1$, we obtain the one-soliton solution
\begin{align}\label{lax-78}
\begin{split}
&q_{1}=2i\frac{\alpha_{1}\gamma_{1}^{\ast}e^{\theta_{1}-\theta_{1}^{\ast}}}{M_{11}},\\
&q_{2}=2i\frac{\beta_{1}\gamma_{1}^{\ast}e^{\theta_{1}-\theta_{1}^{\ast}}}{M_{11}},\\
&q_{3}=2i\frac{\tau_{1}\gamma_{1}^{\ast}e^{\theta_{1}-\theta_{1}^{\ast}}}{M_{11}},\\
&q_{4}=2i\frac{\zeta_{1}\gamma_{1}^{\ast}e^{\theta_{1}-\theta_{1}^{\ast}}}{M_{11}},
\end{split}
\end{align}
\begin{equation*}
\begin{split}
M_{11}=&\frac{1}{\lambda_{1}-\hat{\lambda}_{1}}[(a_{11}|\alpha_{1}|^{2}+a_{22}|\beta_{1}|^{2}+a_{33}|\tau_{1}|^{2}+a_{44}|\zeta_{1}|^{2}+a_{21}\beta_{1}^{\ast}\alpha_{1}+a_{31}\tau_{1}^{\ast}\alpha_{1}\\
&+a_{41}\zeta_{1}^{\ast}\alpha_{1}+a_{21}^{\ast}\alpha_{1}^{\ast}\beta_{1}+a_{32}\tau_{1}^{\ast}\beta_{1}+a_{42}\zeta_{1}^{\ast}\beta_{1}+a_{31}^{\ast}\alpha_{1}^{\ast}\tau_{1}+a_{32}^{\ast}\beta_{1}^{\ast}\tau_{1}+a_{43}\zeta_{1}^{\ast}\tau_{1}\\
&+a_{41}^{\ast}\alpha_{1}^{\ast}\zeta_{1}+a_{42}^{\ast}\beta_{1}^{\ast}\zeta_{1}+a_{43}^{\ast}\tau_{1}^{\ast}\zeta_{1})e^{\theta_{1}+\theta_{1}^{\ast}}-|\gamma_{1}|^{2}e^{-(\theta_{1}+\theta_{1}^{\ast})}],
\end{split}
\end{equation*}
where $\theta_{1}=-i(\lambda_{1}x+2\lambda_{1}^{2}t)$.  Furthermore, fixing $\gamma_{1}=1$, $\lambda_{1}=n_{1}+im_{1}$ and setting
\begin{equation}\label{111}
\begin{split}
-&(a_{11}|\alpha_{1}|^{2}+a_{22}|\beta_{1}|^{2}+a_{33}|\tau_{1}|^{2}+a_{44}|\zeta_{1}|^{2}+a_{21}\beta_{1}^{\ast}\alpha_{1}+a_{31}\tau_{1}^{\ast}\alpha_{1}
+a_{41}\zeta_{1}^{\ast}\alpha_{1}\\
&+a_{21}^{\ast}\alpha_{1}^{\ast}\beta_{1}+a_{32}\tau_{1}^{\ast}\beta_{1}+a_{42}\zeta_{1}^{\ast}\beta_{1}+a_{31}^{\ast}\alpha_{1}^{\ast}\tau_{1}+a_{32}^{\ast}\beta_{1}^{\ast}\tau_{1}+a_{43}\zeta_{1}^{\ast}\tau_{1}
+a_{41}^{\ast}\alpha_{1}^{\ast}\zeta_{1}\\
&+a_{42}^{\ast}\beta_{1}^{\ast}\zeta_{1}+a_{43}^{\ast}\tau_{1}^{\ast}\zeta_{1})=e^{2\xi_{1}}, \end{split}
\end{equation}
the expression \eqref{lax-78} can be written as
\begin{equation}\label{lax-81}
\begin{split}
&q_{1}=2m_{1}\alpha_{1}\gamma_{1}^{\ast}e^{-\xi_{1}}e^{\theta_{1}-\theta_{1}^{\ast}}\mbox{sech}(\theta_{1}^{\ast}+\theta_{1}+\xi_{1}),\\
&q_{2}=2m_{1}\beta_{1}\gamma_{1}^{\ast}e^{-\xi_{1}}e^{\theta_{1}-\theta_{1}^{\ast}}\mbox{sech}(\theta_{1}^{\ast}+\theta_{1}+\xi_{1}),\\
&q_{3}=2m_{1}\tau_{1}\gamma_{1}^{\ast}e^{-\xi_{1}}e^{\theta_{1}-\theta_{1}^{\ast}}\mbox{sech}(\theta_{1}^{\ast}+\theta_{1}+\xi_{1}),\\
&q_{4}=2m_{1}\zeta_{1}\gamma_{1}^{\ast}e^{-\xi_{1}}e^{\theta_{1}-\theta_{1}^{\ast}}\mbox{sech}(\theta_{1}^{\ast}+\theta_{1}+\xi_{1}).
\end{split}
\end{equation}
According to the notation above, we have
\begin{equation}\label{lax-84}
\begin{split}
&\theta_{1}-\theta_{1}^{\ast}=-2in_{1}x-4in_{1}^{2}t,\\
&\theta_{1}+\theta_{1}^{\ast}=2m_{1}x+4im_{1}^{2}t+8n_{1}m_{1}t.
\end{split}
\end{equation}
Thus the one-soliton solutions in \eqref{lax-78} can be further written as
\begin{align}\label{lax-86}
\begin{split}
&q_{1}=2m_{1}\alpha_{1}\gamma_{1}^{\ast}e^{-\xi_{1}}e^{-2in_{1}x-4in_{1}^{2}t}
\mbox{sech}(2m_{1}x+4im_{1}^{2}t+8n_{1}m_{1}t+\xi_{1}),\\
&q_{2}=2m_{1}\beta_{1}\gamma_{1}^{\ast}e^{-\xi_{1}}e^{-2in_{1}x-4in_{1}^{2}t}
\mbox{sech}(2m_{1}x+4im_{1}^{2}t+8n_{1}m_{1}t+\xi_{1}),\\
&q_{3}=2m_{1}\tau_{1}\gamma_{1}^{\ast}e^{-\xi_{1}}e^{-2in_{1}x-4in_{1}^{2}t}
\mbox{sech}(2m_{1}x+4im_{1}^{2}t+8n_{1}m_{1}t+\xi_{1}),\\
&q_{4}=2m_{1}\zeta_{1}\gamma_{1}^{\ast}e^{-\xi_{1}}e^{-2in_{1}x-4in_{1}^{2}t}
\mbox{sech}(2m_{1}x+4im_{1}^{2}t+8n_{1}m_{1}t+\xi_{1}).
\end{split}
\end{align}
From the Eq.\eqref{lax-86}, we can know that the one-soliton solutions of $q_{1}$, $q_{2}$, $q_{3}$ and $q_{4}$ can be described by hyperbolic cosecant function. Taking an example, $q_{1}$ has the peak amplitude
\begin{equation*}\label{lax-89}
\Upsilon_{1}=2m_{1}\alpha_{1}e^{-\xi_{1}},
\end{equation*}
and the velocity
\begin{equation*}\label{lax-90}
\varpi_{1}=2im_{1}t+4n_{1}t,
\end{equation*}
Similarly, we can know the peak amplitude and velocity of $q_{2}$, $q_{3}$ and $q_{4}$, respectively.
\begin{equation*}\label{lax-91}
\Upsilon_{2}=2m_{1}\beta_{1}e^{-\xi_{1}},~~~\Upsilon_{3}=2m_{1}\tau_{1}e^{-\xi_{1}},~~~\Upsilon_{4}=2m_{1}\zeta_{1}e^{-\xi_{1}},
\end{equation*}
\begin{equation*}\label{lax-92}
\varpi_{2}=2im_{1}t+4n_{1}t,~~~\varpi_{3}=2im_{1}t+4n_{1}t,~~~\varpi_{4}=2im_{1}t+4n_{1}t.
\end{equation*}
From the expressions of $\Upsilon_{1}$, $\Upsilon_{2}$, $\Upsilon_{3}$ and $\Upsilon_{4}$, $\varpi_{1}$, $\varpi_{2}$, $\varpi_{3}$, $\varpi_{4}$ we can know that all of them rely on both the real part $n_{1}$ and the imaginary part
$m_{1}$ of the eigenvalue $\lambda_{1}$. Figure 1, Figure 2, Figure 3 and Figure 4 represent the localized structures and
dynamic behaviors of the single-soliton solution. All the analysis of $q_{1}$ be the same with $q_{2}$, $q_{3}$ and $q_{4}$.

{\rotatebox{0}{\includegraphics[width=3.6cm,height=3.0cm,angle=0]{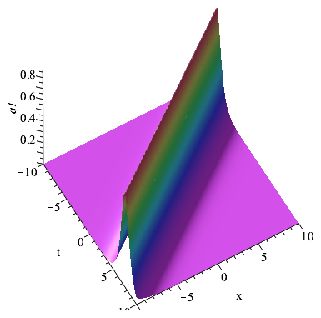}}}
{\rotatebox{0}{\includegraphics[width=3.3cm,height=2.55cm,angle=0]{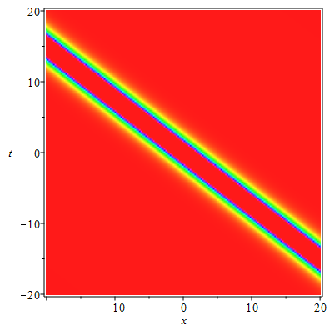}}}
\quad
{\rotatebox{0}{\includegraphics[width=3.6cm,height=2.75cm,angle=0]{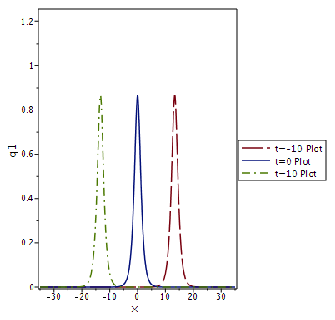}}}

$\quad\qquad\quad(\textbf{a})\quad \ \quad\qquad\qquad\qquad\qquad(\textbf{b})
\quad\qquad\qquad\quad\qquad(\textbf{c})$\\

\noindent { \small \textbf{Figure 1.} Plots of the single-soliton solution $q_{1}$, with the parameters chosen as $a_{11}=a_{22}=a_{33}=a_{44}=0$, $a_{21}=a_{31}=a_{41}=a_{32}=a_{42}=a_{43}=-\frac{1}{9}$, $\alpha_{1}=\tau_{1}=\beta_{1}=\zeta_{1}=\frac{1}{2}-\frac{\sqrt{2}}{2}i$, $\gamma_{1}=1$, $n_{1}=\frac{1}{3}$, $m_{1}=\frac{1}{2}$. $\textbf{(a)}$ three dimensional plot at time $t=0$, $\textbf{(b)}$ density plot, $\textbf{(c)}$  the wave propagation along the $x$-axis with different time.}

{\rotatebox{0}{\includegraphics[width=3.6cm,height=3.0cm,angle=0]{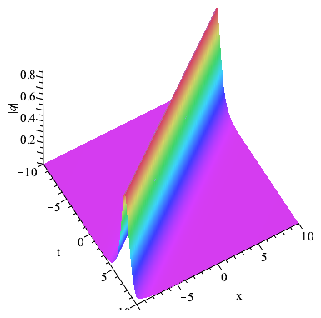}}}
{\rotatebox{0}{\includegraphics[width=3.3cm,height=2.55cm,angle=0]{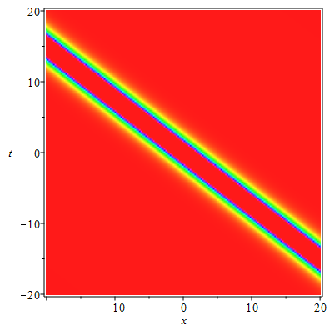}}}
\quad
{\rotatebox{0}{\includegraphics[width=3.6cm,height=2.75cm,angle=0]{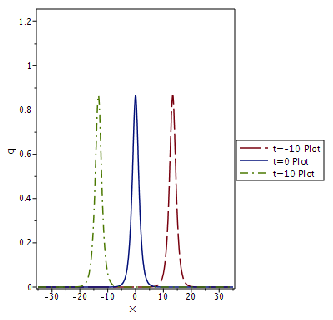}}}

$\quad\qquad\quad(\textbf{a})\quad \ \quad\qquad\qquad\qquad\qquad(\textbf{b})
\quad\qquad\qquad\quad\qquad(\textbf{c})$\\

\noindent { \small \textbf{Figure 2.} Plots of the single-soliton solution $q_{2}$, with the parameters chosen as $a_{11}=a_{22}=a_{33}=a_{44}=0$, $a_{21}=a_{31}=a_{41}=a_{32}=a_{42}=a_{43}=-\frac{1}{9}$, $\alpha_{1}=\tau_{1}=\beta_{1}=\zeta_{1}=\frac{1}{2}-\frac{\sqrt{2}}{2}i$, $\gamma_{1}=1$, $n_{1}=\frac{1}{3}$, $m_{1}=\frac{1}{2}$. $\textbf{(a)}$ three dimensional plot at time $t=0$, $\textbf{(b)}$ density plot, $\textbf{(c)}$  the wave propagation along the $x$-axis with different time.}

{\rotatebox{0}{\includegraphics[width=3.6cm,height=3.0cm,angle=0]{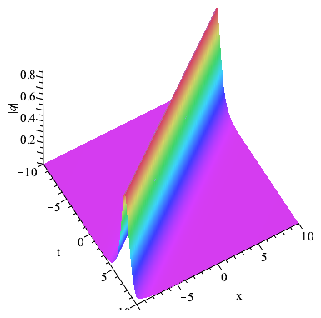}}}
{\rotatebox{0}{\includegraphics[width=3.3cm,height=2.55cm,angle=0]{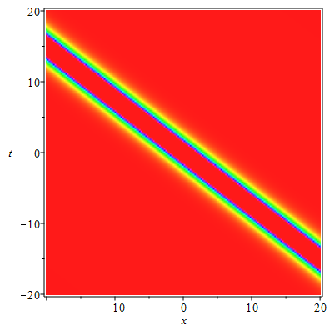}}}
\quad
{\rotatebox{0}{\includegraphics[width=3.6cm,height=2.75cm,angle=0]{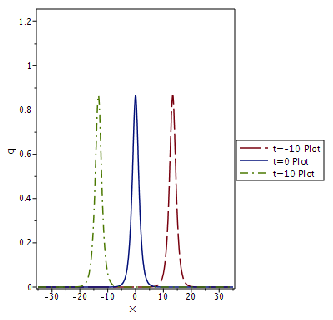}}}

$\quad\qquad\quad(\textbf{a})\quad \ \quad\qquad\qquad\qquad\qquad(\textbf{b})
\quad\qquad\qquad\quad\qquad(\textbf{c})$\\

\noindent { \small \textbf{Figure 3.} Plots of the single-soliton solution $q_{3}$, with the parameters chosen as $a_{11}=a_{22}=a_{33}=a_{44}=0$, $a_{21}=a_{31}=a_{41}=a_{32}=a_{42}=a_{43}=-\frac{1}{9}$, $\alpha_{1}=\tau_{1}=\beta_{1}=\zeta_{1}=\frac{1}{2}-\frac{\sqrt{2}}{2}i$, $\gamma_{1}=1$, $n_{1}=\frac{1}{3}$, $m_{1}=\frac{1}{2}$. $\textbf{(a)}$ three dimensional plot at time $t=0$, $\textbf{(b)}$ density plot, $\textbf{(c)}$  the wave propagation along the $x$-axis with different time.}

{\rotatebox{0}{\includegraphics[width=3.6cm,height=3.0cm,angle=0]{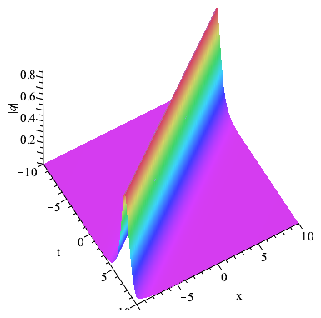}}}
{\rotatebox{0}{\includegraphics[width=3.3cm,height=2.55cm,angle=0]{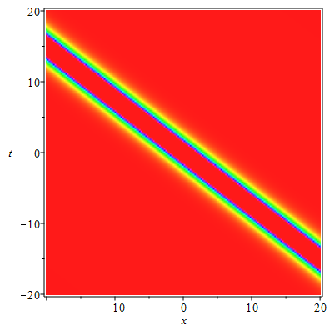}}}
\quad
{\rotatebox{0}{\includegraphics[width=3.6cm,height=2.75cm,angle=0]{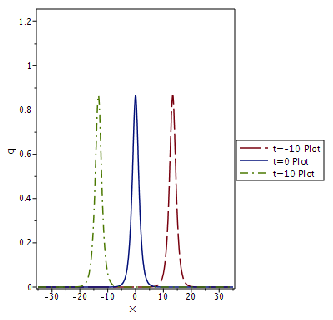}}}

$\quad\qquad\quad(\textbf{a})\quad \ \quad\qquad\qquad\qquad\qquad(\textbf{b})
\quad\qquad\qquad\quad\qquad(\textbf{c})$\\

\noindent { \small \textbf{Figure 4.} Plots of the single-soliton solution $q_{3}$, with the parameters chosen as $a_{11}=a_{22}=a_{33}=a_{44}=0$, $a_{21}=a_{31}=a_{41}=a_{32}=a_{42}=a_{43}=-\frac{1}{9}$, $\alpha_{1}=\tau_{1}=\beta_{1}=\zeta_{1}=\frac{1}{2}-\frac{\sqrt{2}}{2}i$, $\gamma_{1}=1$, $n_{1}=\frac{1}{3}$, $m_{1}=\frac{1}{2}$. $\textbf{(a)}$ three dimensional plot at time $t=0$, $\textbf{(b)}$ density plot, $\textbf{(c)}$  the wave propagation along the $x$-axis with different time.}

In the case of $N=2$,  the two-soliton solutions can be obtained as
\begin{equation}\label{lax-93}
\begin{split}
q_{1}=&\frac{-2i}{M_{11}M_{22}-M_{12}M_{21}}(-\alpha_{1}\gamma_{1}^{\ast}e^{\theta_{1}
-\theta_{1}^{\ast}}M_{22}+\alpha_{1}\gamma_{2}^{\ast}e^{\theta_{1}-\theta_{2}^{\ast}}M_{12}\\
&+\alpha_{2}\gamma_{1}^{\ast}e^{\theta_{2}-\theta_{1}^{\ast}}M_{21}-\alpha_{2}\gamma_{2}^{\ast}
e^{\theta_{2}-\theta_{2}^{\ast}}M_{11}),\\
q_{2}=&\frac{-2i}{M_{11}M_{22}-M_{12}M_{21}}(-\beta_{1}\gamma_{1}^{\ast}e^{\theta_{1}
-\theta_{1}^{\ast}}M_{22}+\beta_{1}\gamma_{2}^{\ast}e^{\theta_{1}-\theta_{2}^{\ast}}M_{12}\\
&+\beta_{2}\gamma_{1}^{\ast}e^{\theta_{2}-\theta_{1}^{\ast}}M_{21}
-\beta_{2}\gamma_{2}^{\ast}e^{\theta_{2}-\theta_{2}^{\ast}}M_{11}),\\
q_{3}=&\frac{-2i}{M_{11}M_{22}-M_{12}M_{21}}(-\tau_{1}\gamma_{1}^{\ast}e^{\theta_{1}-\theta_{1}^{\ast}}
M_{22}+\tau_{1}\gamma_{2}^{\ast}e^{\theta_{1}-\theta_{2}^{\ast}}M_{12}\\
&+\tau_{2}\gamma_{1}^{\ast}e^{\theta_{2}-\theta_{1}^{\ast}}M_{21}
-\tau_{2}\gamma_{2}^{\ast}e^{\theta_{2}-\theta_{2}^{\ast}}M_{11}),\\
q_{4}=&\frac{-2i}{M_{11}M_{22}-M_{12}M_{21}}(-\zeta_{1}\gamma_{1}^{\ast}e^{\theta_{1}
-\theta_{1}^{\ast}}M_{22}+\zeta_{1}\gamma_{2}^{\ast}e^{\theta_{1}-\theta_{2}^{\ast}}M_{12}\\
&+\zeta_{2}\gamma_{1}^{\ast}e^{\theta_{2}-\theta_{1}^{\ast}}M_{21}
-\zeta_{2}\gamma_{2}^{\ast}e^{\theta_{2}-\theta_{2}^{\ast}}M_{11}),
\end{split}
\end{equation}
where
\begin{equation*}\label{lax-96}
\left\{
\begin{aligned}
\begin{split}
M_{11}=&\frac{1}{\lambda_{1}-\lambda_{1}^{\ast}}[(a_{11}|\alpha_{1}|^{2}+a_{22}|\beta_{1}|^{2}+a_{33}|\tau_{1}|^{2}+a_{44}|\zeta_{1}|^{2}+a_{21}\beta_{1}^{\ast}\alpha_{1}+a_{31}\tau_{1}^{\ast}\alpha_{1}\\
&+a_{41}\zeta_{1}^{\ast}\alpha_{1}+a_{21}^{\ast}\alpha_{1}^{\ast}\beta_{1}+a_{32}\tau_{1}^{\ast}\beta_{1}+a_{42}\zeta_{1}^{\ast}\beta_{1}+a_{31}^{\ast}\alpha_{1}^{\ast}\tau_{1}+a_{32}^{\ast}\beta_{1}^{\ast}\tau_{1}+a_{43}\zeta_{1}^{\ast}\tau_{1}\\
&+a_{41}^{\ast}\alpha_{1}^{\ast}\zeta_{1}+a_{42}^{\ast}\beta_{1}^{\ast}\zeta_{1}+a_{43}^{\ast}\tau_{1}^{\ast}\zeta_{1})e^{\theta_{1}^{\ast}+\theta_{1}}-|\gamma_{1}|^{2}e^{-(\theta_{1}^{\ast}+\theta_{1})}],\\
M_{12}=&\frac{1}{\lambda_{2}-\lambda_{1}^{\ast}}[(a_{11}\alpha_{1}^{\ast}\alpha_{2}+a_{21}\beta_{1}^{\ast}\alpha_{2}+a_{31}\tau_{1}^{\ast}\alpha_{2}+a_{41}\zeta_{1}^{\ast}\alpha_{2}+a_{21}^{\ast}\alpha_{1}^{\ast}\beta_{2}+a_{22}\beta_{1}^{\ast}\beta_{2}\\
&+a_{32}\tau_{1}^{\ast}\beta_{2}+a_{42}\zeta_{1}^{\ast}\beta_{2}+a_{31}^{\ast}\alpha_{1}^{\ast}\tau_{2}+a_{32}^{\ast}\beta_{1}^{\ast}\tau_{2}+a_{33}\tau_{1}^{\ast}\tau_{2}+a_{43}\zeta_{1}^{\ast}\tau_{2}+a_{41}^{\ast}\alpha_{1}^{\ast}\zeta_{2}\\
&+a_{42}^{\ast}\beta_{1}^{\ast}\zeta_{2}+a_{43}^{\ast}\tau_{1}^{\ast}\zeta_{2}+a_{44}\zeta_{1}^{\ast}\zeta_{2})e^{\theta_{1}^{\ast}+\theta_{2}}-\gamma_{1}^{\ast}\gamma_{2}e^{-\theta_{1}^{\ast}-\theta_{2}}],\\
M_{21}=&\frac{1}{\lambda_{1}-\lambda_{2}^{\ast}}[(a_{11}\alpha_{2}^{\ast}\alpha_{1}+a_{21}\beta_{2}^{\ast}\alpha_{1}+a_{31}\tau_{2}^{\ast}\alpha_{1}+a_{41}\zeta_{2}^{\ast}\alpha_{1}+a_{21}^{\ast}\alpha_{2}^{\ast}\beta_{1}+a_{22}\beta_{2}^{\ast}\beta_{1}\\
&+a_{32}\tau_{2}^{\ast}\beta_{1}+a_{42}\zeta_{2}^{\ast}\beta_{1}+a_{31}^{\ast}\alpha_{2}^{\ast}\tau_{1}+a_{32}^{\ast}\beta_{2}^{\ast}\tau_{1}+a_{33}\tau_{2}^{\ast}\tau_{1}+a_{43}\zeta_{2}^{\ast}\tau_{1}+a_{41}^{\ast}\alpha_{2}^{\ast}\zeta_{1}\\
&+a_{42}^{\ast}\beta_{2}^{\ast}\zeta_{1}+a_{43}^{\ast}\tau_{2}^{\ast}\zeta_{1}+a_{44}\zeta_{2}^{\ast}\zeta_{1})e^{\theta_{2}^{\ast}+\theta_{1}}-\gamma_{2}^{\ast}\gamma_{1}e^{-\theta_{2}^{\ast}-\theta_{1}}],\\
M_{22}=&\frac{1}{\lambda_{2}-\lambda_{2}^{\ast}}[(a_{11}|\alpha_{2}|^{2}+a_{22}|\beta_{2}|^{2}+a_{33}|\tau_{2}|^{2}+a_{44}|\zeta_{2}|^{2}+a_{21}\beta_{2}^{\ast}\alpha_{2}+a_{31}\tau_{2}^{\ast}\alpha_{2}\\
&+a_{41}\zeta_{2}^{\ast}\alpha_{2}+a_{21}^{\ast}\alpha_{2}^{\ast}\beta_{2}+a_{32}\tau_{2}^{\ast}\beta_{2}+a_{42}\zeta_{2}^{\ast}\beta_{2}+a_{31}^{\ast}\alpha_{2}^{\ast}\tau_{2}+a_{32}^{\ast}\beta_{2}^{\ast}\tau_{2}+a_{43}\zeta_{2}^{\ast}\tau_{2}\\
&+a_{41}^{\ast}\alpha_{2}^{\ast}\zeta_{2}+a_{42}^{\ast}\beta_{2}^{\ast}\zeta_{2}+a_{43}^{\ast}\tau_{2}^{\ast}\zeta_{2})e^{\theta_{2}^{\ast}+\theta_{2}}-|\gamma_{2}|^{2}e^{-(\theta_{2}^{\ast}+\theta_{2})}],\\
\end{split}
\end{aligned}
\right.
\end{equation*}
$\theta_{1}=-i(\lambda_{1}x+2\lambda_{1}^{2}t)$, $\theta_{2}=-i(\lambda_{2}x+2\lambda_{2}^{2}t)$, $\lambda_{1}=n_{1}+im_{1}$ and $\lambda_{2}=n_{2}+im_{2}$. If we let $\gamma_{1}=\gamma_{2}=1$, $\alpha_{1}=\alpha_{2}$, $\beta_{1}=\beta_{2}$, $\tau_{1}=\tau_{2}$, $\zeta_{1}=\zeta_{2}$ and $-(a_{11}|\alpha_{1}|^{2}+a_{22}|\beta_{1}|^{2}+a_{33}|\tau_{1}|^{2}+a_{44}|\zeta_{1}|^{2}+a_{21}\beta_{1}^{\ast}\alpha_{1}+a_{31}\tau_{1}^{\ast}\alpha_{1}+a_{41}\zeta_{1}^{\ast}\alpha_{1}+a_{21}^{\ast}\alpha_{1}^{\ast}\beta_{1}+a_{32}\tau_{1}^{\ast}\beta_{1}+a_{42}\zeta_{1}^{\ast}\beta_{1}+a_{31}^{\ast}\alpha_{1}^{\ast}\tau_{1}+a_{32}^{\ast}\beta_{1}^{\ast}\tau_{1}+a_{43}\zeta_{1}^{\ast}\tau_{1}
+a_{41}^{\ast}\alpha_{1}^{\ast}\zeta_{1}+a_{42}^{\ast}\beta_{1}^{\ast}\zeta_{1}+a_{43}^{\ast}\tau_{1}^{\ast}\zeta_{1})=e^{2\xi_{1}}$, then the two-soliton
solutions in \eqref{lax-93} have the following form
\begin{equation}\label{lax-97}
\begin{split}
q_{1}=&\frac{-2i}{M_{11}M_{22}-M_{12}M_{21}}(-\alpha_{1}e^{\theta_{1}-\theta_{1}^{\ast}}M_{22}+\alpha_{1}e^{\theta_{1}-\theta_{2}^{\ast}}M_{12}\\
&+\alpha_{2}e^{\theta_{2}-\theta_{1}^{\ast}}M_{21}-\alpha_{2}e^{\theta_{2}-\theta_{2}^{\ast}}M_{11}),\\
q_{2}=&\frac{-2i}{M_{11}M_{22}-M_{12}M_{21}}(-\beta_{1}e^{\theta_{1}-\theta_{1}^{\ast}}M_{22}+\beta_{1}e^{\theta_{1}-\theta_{2}^{\ast}}M_{12}\\
&+\beta_{2}e^{\theta_{2}-\theta_{1}^{\ast}}M_{21}-\beta_{2}e^{\theta_{2}-\theta_{2}^{\ast}}M_{11}),\\
q_{3}=&\frac{-2i}{M_{11}M_{22}-M_{12}M_{21}}(-\tau_{1}e^{\theta_{1}-\theta_{1}^{\ast}}M_{22}+\tau_{1}e^{\theta_{1}-\theta_{2}^{\ast}}M_{12}\\
&+\tau_{2}e^{\theta_{2}-\theta_{1}^{\ast}}M_{21}-\tau_{2}e^{\theta_{2}-\theta_{2}^{\ast}}M_{11}),\\
q_{4}=&\frac{-2i}{M_{11}M_{22}-M_{12}M_{21}}(-\zeta_{1}e^{\theta_{1}-\theta_{1}^{\ast}}M_{22}+\zeta_{1}e^{\theta_{1}-\theta_{2}^{\ast}}M_{12}\\
&+\zeta_{2}e^{\theta_{2}-\theta_{1}^{\ast}}M_{21}-\zeta_{2}e^{\theta_{2}-\theta_{2}^{\ast}}M_{11}),
\end{split}
\end{equation}
where
\begin{equation}\label{lax-100}
\left\{
\begin{aligned}
&M_{11}=\frac{-e^{\xi_{1}}}{im_{1}}\cosh(\theta_{1}^{\ast}+\theta_{1}+\xi_{1}),\\
&M_{12}=\frac{-2e^{\xi_{1}}}{n_{2}-n_{1}+i(m_{1}+m_{2})}\cosh(\theta_{1}^{\ast}+\theta_{2}+\xi_{1}),\\
&M_{21}=\frac{-2e^{\xi_{1}}}{n_{1}-n_{2}+i(m_{1}+m_{2})}\cosh(\theta_{2}^{\ast}+\theta_{1}+\xi_{1}),\\
&M_{22}=\frac{-e^{\xi_{1}}}{im_{2}}\cosh(\theta_{2}^{\ast}+\theta_{2}+\xi_{1}).\\
\end{aligned}
\right.
\end{equation}

{\rotatebox{0}{\includegraphics[width=3.6cm,height=3.0cm,angle=0]{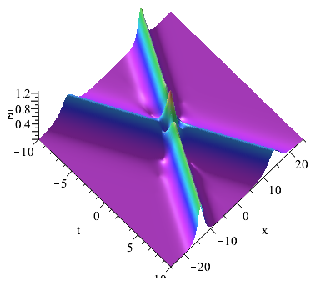}}}
{\rotatebox{0}{\includegraphics[width=3.3cm,height=2.55cm,angle=0]{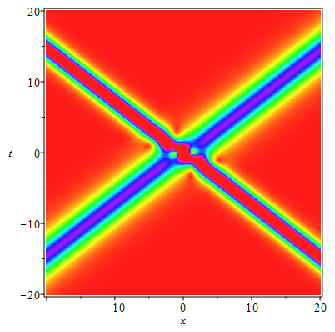}}}
\quad
{\rotatebox{0}{\includegraphics[width=3.6cm,height=2.75cm,angle=0]{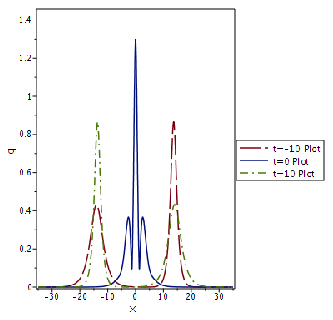}}}

$\quad\qquad\quad(\textbf{a})\quad \ \quad\qquad\qquad\qquad\qquad(\textbf{b})
\quad\qquad\qquad\quad\qquad(\textbf{c})$\\

\noindent { \small \textbf{Figure 5.} Plots of the double-soliton solution $q_{1}$, with the parameters chosen as $a_{11}=a_{22}=a_{33}=a_{44}=0$, $a_{21}=a_{31}=a_{41}=a_{32}=a_{42}=a_{43}=-\frac{1}{9}$, $\alpha_{1}=\tau_{1}=\beta_{1}=\zeta_{1}=\frac{1}{2}-\frac{\sqrt{2}}{2}i$, $\gamma_{1}=\gamma_{2}=1$, $\alpha_{1}=\alpha_{2}$, $\beta_{1}=\beta_{2}$, $\tau_{1}=\tau_{2}$, $\zeta_{1}=\zeta_{2}$ $n_{1}=\frac{-1}{3}$, $n_{2}=\frac{1}{3}$, $m_{1}=0.25$, $m_{2}=0.5$. $\textbf{(a)}$ three dimensional plot at time $t=0$, $\textbf{(b)}$ density plot, $\textbf{(c)}$  the wave propagation along the $x$-axis with different time.}

{\rotatebox{0}{\includegraphics[width=3.6cm,height=3.0cm,angle=0]{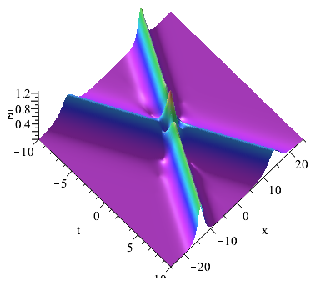}}}
{\rotatebox{0}{\includegraphics[width=3.3cm,height=2.55cm,angle=0]{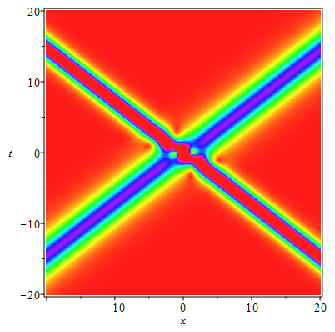}}}
\quad
{\rotatebox{0}{\includegraphics[width=3.6cm,height=2.75cm,angle=0]{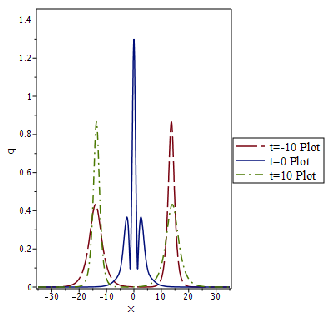}}}

$\quad\qquad\quad(\textbf{a})\quad \ \quad\qquad\qquad\qquad\qquad(\textbf{b})
\quad\qquad\qquad\quad\qquad(\textbf{c})$\\

\noindent { \small \textbf{Figure 6.} Plots of the double-soliton solution $q_{1}$, with the parameters chosen as $a_{11}=a_{22}=a_{33}=a_{44}=0$, $a_{21}=a_{31}=a_{41}=a_{32}=a_{42}=a_{43}=-\frac{1}{9}$, $\alpha_{1}=\tau_{1}=\beta_{1}=\zeta_{1}=\frac{1}{2}-\frac{\sqrt{2}}{2}i$, $\gamma_{1}=\gamma_{2}=1$, $\alpha_{1}=\alpha_{2}$, $\beta_{1}=\beta_{2}$, $\tau_{1}=\tau_{2}$, $\zeta_{1}=\zeta_{2}$ $n_{1}=\frac{-1}{3}$, $n_{2}=\frac{1}{3}$, $m_{1}=0.25$, $m_{2}=0.5$. $\textbf{(a)}$ three dimensional plot at time $t=0$, $\textbf{(b)}$ density plot, $\textbf{(c)}$  the wave propagation along the $x$-axis with different time.}

{\rotatebox{0}{\includegraphics[width=3.6cm,height=3.0cm,angle=0]{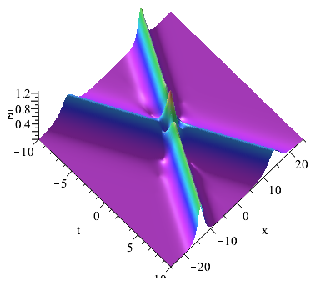}}}
{\rotatebox{0}{\includegraphics[width=3.3cm,height=2.55cm,angle=0]{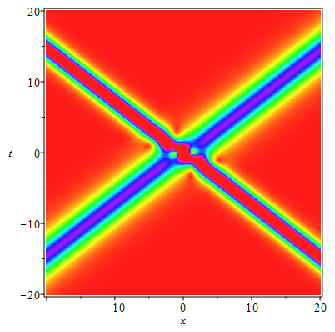}}}
\quad
{\rotatebox{0}{\includegraphics[width=3.6cm,height=2.75cm,angle=0]{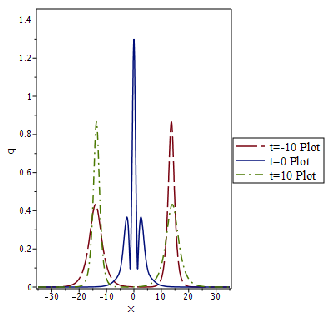}}}

$\quad\qquad\quad(\textbf{a})\quad \ \quad\qquad\qquad\qquad\qquad(\textbf{b})
\quad\qquad\qquad\quad\qquad(\textbf{c})$\\

\noindent { \small \textbf{Figure 7.} Plots of the double-soliton solution $q_{1}$, with the parameters chosen as $a_{11}=a_{22}=a_{33}=a_{44}=0$, $a_{21}=a_{31}=a_{41}=a_{32}=a_{42}=a_{43}=-\frac{1}{9}$, $\alpha_{1}=\tau_{1}=\beta_{1}=\zeta_{1}=\frac{1}{2}-\frac{\sqrt{2}}{2}i$, $\gamma_{1}=\gamma_{2}=1$, $\alpha_{1}=\alpha_{2}$, $\beta_{1}=\beta_{2}$, $\tau_{1}=\tau_{2}$, $\zeta_{1}=\zeta_{2}$ $n_{1}=\frac{-1}{3}$, $n_{2}=\frac{1}{3}$, $m_{1}=0.25$, $m_{2}=0.5$. $\textbf{(a)}$ three dimensional plot at time $t=0$, $\textbf{(b)}$ density plot, $\textbf{(c)}$  the wave propagation along the $x$-axis with different time.}

{\rotatebox{0}{\includegraphics[width=3.6cm,height=3.0cm,angle=0]{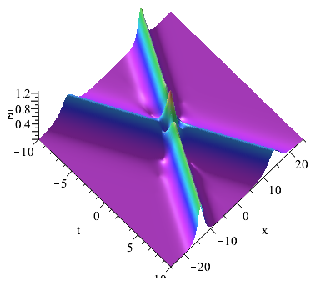}}}
{\rotatebox{0}{\includegraphics[width=3.3cm,height=2.55cm,angle=0]{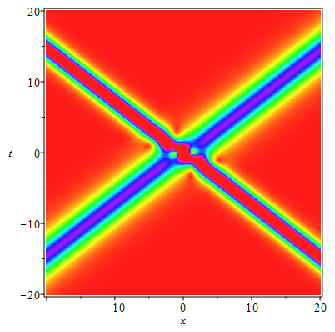}}}
\quad
{\rotatebox{0}{\includegraphics[width=3.6cm,height=2.75cm,angle=0]{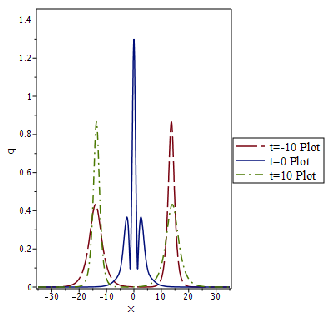}}}

$\quad\qquad\quad(\textbf{a})\quad \ \quad\qquad\qquad\qquad\qquad(\textbf{b})
\quad\qquad\qquad\quad\qquad(\textbf{c})$\\

\noindent { \small \textbf{Figure 8.} Plots of the double-soliton solution $q_{1}$, with the parameters chosen as $a_{11}=a_{22}=a_{33}=a_{44}=0$, $a_{21}=a_{31}=a_{41}=a_{32}=a_{42}=a_{43}=-\frac{1}{9}$, $\alpha_{1}=\tau_{1}=\beta_{1}=\zeta_{1}=\frac{1}{2}-\frac{\sqrt{2}}{2}i$, $\gamma_{1}=\gamma_{2}=1$, $\alpha_{1}=\alpha_{2}$, $\beta_{1}=\beta_{2}$, $\tau_{1}=\tau_{2}$, $\zeta_{1}=\zeta_{2}$ $n_{1}=\frac{-1}{3}$, $n_{2}=\frac{1}{3}$, $m_{1}=0.25$, $m_{2}=0.5$. $\textbf{(a)}$ three dimensional plot at time $t=0$, $\textbf{(b)}$ density plot, $\textbf{(c)}$  the wave propagation along the $x$-axis with different time.}

\section*{Conclusions and discussions}

 In this work, we have proposed a FCNLS equation \eqref{NLS-1} associated with a $5\times5$ Lax pair, which was investigated via the RH approach.
 Based on the Lax pair with a $5\times5$ matrix, we start with the analyze of the spectral problem and the analytical properties of the Jost functions, from which the RH problem of the equation is established. Then, we obtain the $N$-soliton solutions of the FCNLS equation \eqref{NLS-1}, by solving the RH problem without reflection. Finally, we derive two special cases of the solutions to the equation for $N=1$ and $N=2$, and the local structure and dynamic behavior of the one-and two-soliton solutions are analyzed graphically.
\section*{Acknowledgements}

%\textcolor[rgb]{1.00,0.00,0.00}{The authors would like to thank the editor and the referees for their valuable comments and suggestions.}
This work was supported by    the Postgraduate Research and Practice of Educational Reform for Graduate students in CUMT under Grant No. 2019YJSJG046, the Natural Science Foundation of Jiangsu Province under Grant No. BK20181351, the Six Talent Peaks Project in Jiangsu Province under Grant No. JY-059, the Qinglan Project of Jiangsu Province of China, the National Natural Science Foundation of China under Grant No. 11975306, the Fundamental Research Fund for the Central Universities under the Grant Nos. 2019ZDPY07 and 2019QNA35, and the General Financial Grant from the China Postdoctoral Science Foundation under Grant Nos. 2015M570498 and 2017T100413.

\section*{Reference }


\begin{thebibliography}{3}

\bibitem{1}
 M.J. Ablowitz, P.A. Clarkson, Solitons; Nonlinear Evolution Equations and Inverse Scattering, Cambridge University Press, Cambridge, 1991.
\bibitem{2}
V.B. Matveev, M.A. Salle, Darboux Transformation and Solitons, Springer, Berlin, 1991.
\bibitem{3}
R. Hirota, Direct Methods in Soliton Theory, Springer, Berlin, 2004.
\bibitem{4}
G.W. Bluman, S. Kumei, Symmetries and Differential Equations, Springer-Verlag, New York, 1989.
\bibitem{5}
V.E. Zakharov, S.V. Manakov, S.P. Novikov, L.P. Pitaevskii, The Theory of Solitons: The Inverse Scattering Method, Consultants Bureau, New York, 1984.
\bibitem{6}
S.F. Tian, Initial-boundary value problems for the general coupled nonlinear
Schr\"{o}dinger equation on the interval via the Fokas method, J. Differ. Equ. 262
(2017) 506-558.
\bibitem{7}
 S.F. Tian, Initial-boundary value problems for the coupled modified Korteweg-de
Vries equation on the interval, Commun. Pure Appl. Anal. 17 (2018) 923-957.
\bibitem{8}
S.F. Tian, The mixed coupled nonlinear Schr\"{o}dinger equation on the half-line via
the Fokas method, Proc. R. Soc. Lond. A 472 (2016) 20160588.

\bibitem{10}
 B. Guo, L. Ling, Riemann-Hilbert approach and $N$-soliton formula for coupled
derivative Schr\"{o}dinger equation, J. Math. Phys. 53 (2012) 133-3966.
\bibitem{11}
 A.S. Fokas, J. Lenells, The unified method: I Nonlinearizable problems on the
half-line, J. Phys. A 45 (2012) 195201.
\bibitem{12}
J. Lenells, Initial-boundary value problems for integrable evolution equations
with $3 \times 3$ Lax pairs, Phys. D 241 (2012) 857.
\bibitem{20}
 L. Wen, E.G. Fan, The Riemann-Hilbert approach to focusing Kundu-Eckhaus equation with nonzero boundary conditions, arXiv:1910.08921,2019.
\bibitem{21}
 J. Xu, E.G. Fan, Long-time asymptotics for the Fokas-Lenells equation with de-
caying initial value problem: without solitons, J. Differ. Equ. 259(3) (2015) 1098-
1148.

\bibitem{13}
 W.X. Ma, Riemann-Hilbert problems and $N$-soliton solutions for a coupled
mKdV system, J. Geom. Phys. 132 (2018) 45-54.
\bibitem{29}
 W.X. Ma, The inverse scattering transform and soliton solutions of a combined modified Korteweg-de Vries equation, J. Math. Anal. Appl. 471
(1-2) (2019) 796-811.
\bibitem{14}
X. Geng and J. Wu, Riemann-Hilbert approachand $N$-soliton solutions for a generalized Sasa-Satsuma equation, Wave Motion. 60 (2016) 62-72.

\bibitem{30}
Z.Y. Yan, An initial-boundary value problem for the integrable spin-1 Gross-
Pitaevskii equations with a $4\times4$ Lax pair on the half-line, Chaos 27(5) (2017)
053117.

\bibitem{15}
 B. Q. Xia, A. S. Fokas, Initial-boundary value problems associated with the
Ablowitz-Ladik system,  Arxiv ID: 1703.01687.

\bibitem{9}
 Y.S. Zhang, Y. Cheng, J.S. He, Riemann-Hilbert Method and $N$-soliton for two-
component Gerdjikov-Ivanov equation, J. Nonlinear Math. Phys. 24 (2) (2017)
210-223.

\bibitem{16}
 S.F. Tian, Initial-boundary value problems of the coupled modified Korteweg-de
Vries equation on the half-line via the Fokas method, J. Phys. A: Math. Theor.
50(39) (2017) 395204.
\bibitem{17}
 W.Q. Peng, S.F. Tian, X.B. Wang, T.T. Zhang, Riemann-Hilbert method and
multi-soliton solutions for three-component coupled nonlinear Schr\"{o}dinger equa-
tions, J. Geom. Phys. 146 (2019) 103508.
\bibitem{18}
 J. Zhu, L. Wang, X. Geng, Riemann-Hilbert approach to TD equation with
nonzero boundary condition. Front. Math. China. 13(5) (2018) 1245-1265.
\bibitem{19}
 G. Zhang, S. Chen, Z. Yan, Focusing and defocusing Hirota equations with nonzero boundary conditions: Inverse scattering transforms and soliton solutions.
Commun. Nonlinear Sci. Numer. Simul. 80 (2020) 10492.

\bibitem{22}
 X.E. Zhang, Y. Chen, Inverse scattering transformation for generalized nonlinear
Schr\"{o}dinger equation, Appl. Math. Lett. 98 (2019) 306-313.
\bibitem{23}
S.F. Tian and T.T. Zhang, Long-time asymptotic behavior for the Gerdjikov-Ivanov type of derivative nonlinear Schr\"{o}dinger equation with time-periodic
boundary condition, Proc. Am. Math. Soc. 146 (2018) 1713-1729.
\bibitem{24}
Z.Q. Li, S.F. Tian, J.J. Yang, Riemann-Hilbert approach and soliton solutions for
the higher order dispersive nonlinear Schr\"{o}dinger equation with nonzero boundary
conditions, arXiv:1911.01624.
\bibitem{25}
 J.J. Mao, S.F. Tian, Riemann-Hilbert approach for the NLSLab equation with
nonzero boundary conditions, arXiv:1911.00683.
\bibitem{26}
 J.J. Yang, S.F. Tian, Riemann-Hilbert problem for the modified Landau-Lifshitz
equation with nonzero boundary conditions, arXiv:1909.11263.
\bibitem{27}
N. Liu, B. Guo, Long-time asymptotics for the Sasa-Satsuma equation via nonlinear steepest descent method, J. Math. Phys. 60 (2019) 011504.
\bibitem{28}
 J.J. Yang, S.F. Tian, Z.Q. Li, Inverse scattering transform and soliton solutions
for the focusing Kundu-Eckhausequation with nonvanishing boundary conditions,
arXiv:1911.00340.

\bibitem{31}
D.S. Wang, B.L. Guo, X.L. Wang, Long-time asymptotics of the focusing Kundu-Eckhaus equation with nonzero boundary conditions, J. Differential
Equations 266 (9) (2019) 5209-5253.
\bibitem{32}
D.S. Wang, D. Zhang, J. Yang, Integrable properties of the general coupled nonlinear Schr\"{o}dinger equations, J. Math. Phys. 51 (2010) 023510.
\bibitem{33}
Z.Y. Yan, An initial-boundary value problem of the general three-component nonlinear Schr\"{o}dinger equation with a $4\times4$ Lax pair on a finite interval, arXiv:1704.08561, 2017.

\end{thebibliography}
\end{document}